\newif\ifappendix\appendixfalse
\makeatletter \@input{flags} \makeatother
\theoremstyle{plain}
\newtheorem{theorem}{Theorem}[section]
\newtheorem{lemma}[theorem]{Lemma}
\newtheorem{corollary}[theorem]{Corollary}
\theoremstyle{definition}
\newtheorem{definition}[theorem]{Definition}
\newtheorem{example}[theorem]{Example}
\theoremstyle{remark}
\newtheorem{remark}[theorem]{Remark}
\DeclareMathOperator{\CPO}{\mathsf{CPO}}
\DeclareMathOperator{\CLat}{\mathsf{CLat}_\land}
\DeclareMathOperator{\N}{\mathbb{N}}
\DeclareMathOperator{\Mono}{\xrightarrow{\mathrm{mono}}}
\DeclareMathOperator{\Cont}{\xrightarrow{\mathrm{cont}}}
\DeclareMathOperator{\im}{\mathrm{im}}
\newcommand{\later}{\smalltriangleright}
\title{On Pitts' Relational Properties of Domains}
\author{Arthur Azevedo de Amorim}
\begin{document}

\maketitle

\begin{abstract}
  Andrew Pitts' framework of \emph{relational properties of domains} is a
  powerful method for defining predicates or relations on domains, with
  applications ranging from reasoning principles for program equivalence to
  proofs of adequacy connecting denotational and operational semantics.  Its
  main appeal is handling recursive definitions that are not obviously
  well-founded: as long as the corresponding domain is also defined recursively,
  and its recursion pattern lines up appropriately with the definition of the
  relations, the framework can guarantee their existence.

  Pitts' original development used the Knaster-Tarski fixed-point theorem as a
  key ingredient.  In these notes, I show how his construction can be seen as an
  instance of other key fixed-point theorems: the \emph{inverse limit
    construction}, the \emph{Banach fixed-point theorem} and the \emph{Kleene
    fixed-point theorem}.  The connection underscores how Pitts' construction is
  intimately tied to the methods for constructing the base recursive domains
  themselves, and also to techniques based on \emph{guarded recursion}, or
  \emph{step-indexing}, that have become popular in the last two decades.
\end{abstract}

\section{The Original Result}
\label{sec:original}

When reasoning about programs, it is common to compare their behaviors.  We
might ask if two programs behave equivalently, if their public outputs are
equal, or if one program terminates more often than the other, among other
questions.  Many of these issues can be phrased naturally using recursive
relations.  For example, to argue that two functions are equivalent, we might
want to check if they produce equivalent outputs when applied to equivalent
inputs, for some notion of equivalence.  However, when reasoning about
higher-order or stateful programs, equivalence for inputs and outputs is defined
in terms of equivalence for arbitrary programs.  Thus, we end up with a circular
definition of equivalence, which requires care to justify formally without
running into paradoxes.

Andrew Pitts' framework of \emph{relational properties of
  domains}~\cite{Pitts96} is a powerful tool for constructing such relations.
We can summarize the idea as follows.

\begin{theorem}[\cite{Pitts96}]
  \label{thm:minimal-invariant-relations}
  Let $D$ be an object of a pointed $\CPO$-category $\mathcal{C}$.  Suppose that
  $D$ is equipped with an isomorphism $i : F(D,D) \cong D$ that satisfies the
  minimal invariant property, where
  $F : \mathcal{C}^{op} \times \mathcal{C} \to \mathcal{C}$ is a $\CPO$-functor.
  Suppose moreover that $\mathcal{C}$ is equipped with an admissible relational
  structure $\mathcal{R}$, and that $F$ acts on $\mathcal{R}$.  Then there
  exists $R_D \in \mathcal{R}_D$ such that $R_D = (i^{-1})^*F(R_D,R_D)$.
\end{theorem}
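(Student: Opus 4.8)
The plan is to reduce the statement to the Knaster--Tarski fixed-point theorem, in the spirit of Pitts' original argument, invoking the minimal invariant property only at the end to collapse a pair of relations into a single one. Write $L := \mathcal{R}_D$ ordered by inclusion. Admissibility of $\mathcal{R}$ guarantees that $L$ is a complete lattice (it is closed under arbitrary intersections and has a greatest element) and that every $R \in L$ is chain-closed and contains the pair of least elements of $D$. Since $F$ acts on $\mathcal{R}$, it provides an operation $\mathcal{R}_A \times \mathcal{R}_B \to \mathcal{R}_{F(A,B)}$ that is antitone in its first argument and monotone in its second, and that is compatible with the morphism action of $F$: whenever $f$ carries $S'$-related pairs to $S$-related pairs and $g$ carries $R$-related pairs to $R'$-related pairs, $F(f,g)$ carries $F(S,R)$-related pairs to $F(S',R')$-related pairs. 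Using this together with reindexing along the isomorphism $i^{-1}$ (which, since $i$ is an isomorphism in $\mathcal{C}$, maps $\mathcal{R}$ to $\mathcal{R}$), I would define
\[
  \Phi : L^{\mathrm{op}} \times L \to L^{\mathrm{op}} \times L, \qquad
  \Phi(R^-, R^+) := \bigl( (i^{-1})^* F(R^+, R^-),\; (i^{-1})^* F(R^-, R^+) \bigr).
\]
A quick check of variances shows that $\Phi$ is monotone on the complete lattice $L^{\mathrm{op}} \times L$, so by Knaster--Tarski it has a least fixed point $(R^-, R^+)$.

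Note that the equation $\Phi(R^-,R^+) = (R^-,R^+)$ unfolds to $(i^{-1})^* F(R^+, R^-) = R^-$ and $(i^{-1})^* F(R^-, R^+) = R^+$, so the swapped pair $(R^+, R^-)$ is again a fixed point of $\Phi$; minimality then forces $(R^-, R^+) \le (R^+, R^-)$ in $L^{\mathrm{op}} \times L$, which is precisely the inclusion $R^+ \subseteq R^-$. The remaining task --- and the real obstacle --- is the converse inclusion $R^- \subseteq R^+$, and this is where the minimal invariant property is indispensable. Because $\mathcal{C}$ is a pointed $\CPO$-category and $F$ is a $\CPO$-functor, the map $\delta(e) := i \circ F(e,e) \circ i^{-1}$ is continuous on $\mathcal{C}(D,D)$, and the minimal invariant property says exactly that $\mathrm{id}_D$ is its least fixed point; hence, by the Kleene fixed-point theorem, $\mathrm{id}_D = \bigsqcup_n e_n$ with $e_0 := \bot$ and $e_{n+1} := \delta(e_n)$.

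I would then prove by induction on $n$ that each $e_n$ carries $R^-$-related pairs to $R^+$-related pairs. For $n=0$ this holds because $R^+$, being admissible, contains the pair of least elements. For the inductive step, feed the hypothesis $e_n : R^- \to R^+$ into both arguments of the action of $F$ to obtain $F(e_n,e_n) : F(R^+, R^-) \to F(R^-, R^+)$; combined with the fixed-point equations, which read $i^{-1} : R^- \to F(R^+, R^-)$ and $i : F(R^-, R^+) \to R^+$, the composite $e_{n+1} = i \circ F(e_n,e_n) \circ i^{-1}$ carries $R^-$-related pairs to $R^+$-related pairs. Taking suprema and using that $R^+$ is chain-closed, $\mathrm{id}_D = \bigsqcup_n e_n$ carries $R^-$-related pairs to $R^+$-related pairs, i.e. $R^- \subseteq R^+$. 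Thus $R^- = R^+$; setting $R_D := R^- = R^+$, the first fixed-point equation becomes $R_D = (i^{-1})^* F(R_D, R_D)$, as required. The points that require care are the variance bookkeeping that makes $\Phi$ monotone and makes $F(e_n,e_n)$ transport the relations in the right direction; once those are pinned down, the proof is just an orchestration of admissibility with the Knaster--Tarski and Kleene theorems.
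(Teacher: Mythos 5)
Your proposal is correct and follows essentially the same route as the paper's primary proof: separating variances to get a monotone map on $\mathcal{R}_D^{op} \times \mathcal{R}_D$, applying Knaster--Tarski, using minimality of the least fixed point against the swapped pair to get $R^+ \leq R^-$, and then an induction on the projections $\pi_n$ combined with admissibility to obtain the reverse inequality. The only nitpick is that completeness of the lattice $\mathcal{R}_D$ comes from the definition of a relational structure (a functor into $\CLat$), not from admissibility, which is needed only for the base case $\bot : R^- \to R^+$ and for passing to the limit $1_D = \lim_n \pi_n$.
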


Here is how we can read this result intuitively, before diving into formal
definitions. The object $D$ is a universe where we model the behavior of the
programs.  In Pitts' original result, $D$ was assumed to be a \emph{complete
  partial order}, or CPO, a domain-theoretic notion for modeling general
recursion and nontermination.  Here, instead, we assume that $D$ lives in some
pointed $\CPO$-category $\mathcal{C}$, a generalization allows us to carry the
core of Pitts' arguments while accounting for variations that have been explored
in the literature, such as families of CPOs~\cite{AmorimFJ20}, diagrams of
CPOs~\cite{Levy02}, or CPOs equipped with a metric~\cite{AmorimGHKC17}.

We assume that $D$ is defined recursively as $F(D,D) \cong D$.  The equation is
stated using a functor $F$, where each recursive occurrence of $D$ is either
contravariant or covariant; being a $\CPO$-functor simply means that $F$
interacts well with the structure of $\mathcal{C}$.  In principle, there could
be many solutions to such equations, but \Cref{thm:minimal-invariant-relations}
only applies to those that satisfy the \emph{minimal invariant property}, which
roughly means that $D$ is completely characterized by repeatedly unfolding its
definition.

The conclusion of the theorem says that we can construct some ``relation'' $R_D$
on $D$.  In the applications we sketched above, $R_D$ could be a binary relation
on a CPO, but the result applies to other settings as well, such as relations of
different arities or families of relations.  The relational structure
$\mathcal{R}$ formalizes which properties are required of the notion of
``relation'' for the construction to apply.  The definition of $R_D$ is given by
a recursive equation $R_D = (i^{-1})^*F(R_D,R_D)$, which is derived from an
action of $F$ on $\mathcal{R}$.  Different actions and relational structures
yield different definitions, and it is our job to choose them appropriately
depending on the application at hand.

\medskip

Let us now spell out how this works in detail. A \emph{complete partial order}
(CPO) is a poset $(X,\sqsubseteq)$ such that every increasing chain
$x : \N \Mono X$ has a \emph{limit}, or least upper bound, denoted
$\lim_n x(n)$.  A CPO $X$ is \emph{pointed} if it has a least element
$\bot \in X$. A function $f : X \to Y$ between CPOs is \emph{continuous},
denoted $f : X \Cont Y$, if it is monotone and preserves limits.  CPOs and
continuous functions between them form a category $\CPO$.  This category is
cartesian closed; the exponential $Y^X$ is given by the set of continuous
functions of type $X \Cont Y$ ordered pointwise.

A $\CPO$-category is a category $\mathcal{C}$ where the sets of morphisms
$\mathcal{C}(X,Y)$ are CPOs, and such that composition
$(-) \circ (-) : \mathcal{C}(Y,Z) \times \mathcal{C}(X,Y) \to \mathcal{C}(X,Z)$
is continuous.  (Product CPOs are ordered component-wise.)  The most basic
example of $\CPO$-category is $\CPO$ itself, for the order relation on
continuous functions defined above.  Another example is given by \emph{functor
  categories} of the form $\mathcal{C}^I$, where $\mathcal{C}$ is a
$\CPO$-category and $I$ is a small category.  A morphism in $\mathcal{C}^I$ is a
family of arrows $(X_i \to Y_i)_{i \in I}$, and we obtain a $\CPO$-category by
ordering such families pointwise.  Combined with the previous example, this
shows that families or diagrams of CPOs also form $\CPO$-categories.  If
$\mathcal{C}$ is a $\CPO$-category, then so is $\mathcal{C}^{op}$, by inheriting
the structure on $\mathcal{C}$. A $\CPO$-functor
$F : \mathcal{C} \to \mathcal{D}$ is a functor whose action on morphisms is a
continuous function.

Given a $\CPO$-category, we say that $Y \in \mathcal{C}$ is \emph{pointed} if
$\mathcal{C}(X,Y)$ is pointed for every $X$, and if $\bot \circ f = \bot$ for
every $f$.  When $\mathcal{C} = \CPO$, this definition of pointedness coincides
with the one given above.  Every terminal object 1 is pointed: if $X$ is an
object, the unique arrow of type $X \to 1$ is the least element.  We say that
$\mathcal{C}$ itself is pointed if every object is pointed and it has a terminal
object 1.  In this case, any $\bot : X \to Y$ in $\mathcal{C}$ factors through
1.  For example, $\CPO$ is not a pointed $\CPO$-category (because not every CPO
is pointed according to our definition), but we do obtain a pointed
$\CPO$-category by restricting ourselves to pointed CPOs.  (More generally, any
$\CPO$-category $\mathcal{C}$ with a terminal object has a pointed counterpart
$\mathcal{C}_\bot$ obtained by restricting $\mathcal{C}$ to pointed objects.)

Given an isomorphism $i : F(D,D) \cong D$ in $\mathcal{C}$, where $D$ is
pointed, we say that $D$ has the \emph{minimal invariant property} if the
following condition holds.  First, given $\pi : D \Cont D$, we define
$\Phi(\pi) : D \Cont D$ as
$\Phi(\pi) \triangleq i \circ F(\pi,\pi) \circ i^{-1}$, and pose
$\pi_i \triangleq \Phi^i(\bot)$.  Intuitively, each $\pi_i$ is a projection
function that truncates $D$ to allow for at most $i$ unfoldings of its
definition; hence, $\lim_i \pi_i$ is a projection function that allows for an
arbitrary number of unfoldings.  Intuitively, we would expect $\lim_i \pi_i$ to
be the identity on $D$, because leaving the number of unfoldings unbounded
should be tantamount to not truncating $D$ at all.  However, this is not
necessarily true; the best we can show in general is
$\lim_i \pi_i\sqsubseteq 1_D$.  The minimal invariant property says precisely
that we can strengthen this inequality to $\lim_i \pi_i = 1_D$.

Given a category $\mathcal{C}$, a \emph{relational structure} on $\mathcal{C}$
is simply a functor $\mathcal{R} : \mathcal{C}^{op} \to \CLat$, where $\CLat$ is
the category of complete lattices and functions that preserve greatest lower
bounds.  We denote the value of $\mathcal{R}$ at some object $X \in \mathcal{C}$
as $\mathcal{R}_X$, and we use the variables $R$, $S$ and $T$ to range over the
elements of $\mathcal{R}_X$.  If $f : X \to Y$ is a morphism in $\mathcal{C}$,
we write $\mathcal{R}(f)$ as $f^*$ when $\mathcal{R}$ can be understood from the
context.

\begin{example}
  \label{ex:relational-structure}
  Our motivating example of relational structure is the one obtained by choosing
  $\mathcal{C} = \CPO$, and posing $\mathcal{R}_X$ to be the set of binary
  relations between the elements of $X$ ordered by inclusion.  The greatest lower
  bound of a family of relations is simply their \emph{intersection}.  And the
  action of a continuous function $f : X \Cont Y$ on $\mathcal{R}_Y$ takes the
  \emph{inverse image} of a relation by $f$.  For intuition, we'll keep this
  vocabulary when discussing other relational structures as well.
\end{example}

To define \emph{admissible} relational structures, it is convenient to shift our
perspective a bit.  Given a relational structure $\mathcal{R}$ on $\mathcal{C}$,
we can build a category, also denoted $\mathcal{R}$, as follows.  The objects of
$\mathcal{R}$ are pairs $(X,R)$, where $X \in \mathcal{C}$ and
$R \in \mathcal{R}_X$.  (By abuse of notation, I'll often use $R$ to represent
the object $(X,R) \in \mathcal{R}$.)  A morphism $f : (X,R) \to (Y,S)$ is a
morphism $f : X \to Y$ in $\mathcal{C}$ such that $R \leq f^*S$.  In terms of
\Cref{ex:relational-structure}, this simply means that the function $f$ takes
elements related by $R$ to elements related by $S$.  We can check that
identities and composition in $\mathcal{C}$ can be lifted to the morphisms of
$\mathcal{R}$.  We have a canonical functor $p : \mathcal{R} \to \mathcal{C}$
that maps the object $(X,R)$ to $X$ and acts as the identity on morphisms.

By unfolding definitions, we can restate some of the properties of the functor
$\mathcal{R}$ in terms of the above construction:
\begin{lemma}
  \label{lem:relational-structure-properties}
  Let $X$, $Y$ and $Z$ be arbitrary objects of $\mathcal{C}$.
  \begin{enumerate}
  \item $1_X : R \to S \iff R \leq S$, for all $R, S \in \mathcal{R}_X$.
  \item $f : f^*S \to S$ for any $S \in \mathcal{R}_Y$ and $f : X \to Y$.
  \item $gf : R \to S \iff f : R \to g^*S$, for all $f : X \to Y$,
    $g : Y \to Z$, $S \in \mathcal{R}_Z$ and $R \in \mathcal{R}_X$.
  \item $f : R \to \bigcap_{i \in I} S_i \iff \forall i \in I, f : R \to S_i$,
    for any index set $I$, $f : X \to Y$, $R \in \mathcal{R}_X$ and
    $S \in \mathcal{R}_Y^I$.
  \end{enumerate}
\end{lemma}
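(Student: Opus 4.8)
The plan is to derive all four items directly from the definition of the hom-notation in the category $\mathcal{R}$ — that is, $f : R \to S$ holds exactly when $R \le f^*S$ — together with the only two structural facts we have about a relational structure: that $\mathcal{R}$ is a functor $\mathcal{C}^{op} \to \CLat$, so that $1_X^* = 1_{\mathcal{R}_X}$ and $(g \circ f)^* = f^* \circ g^*$, and that each $f^* = \mathcal{R}(f)$ is a morphism of $\CLat$, hence preserves all greatest lower bounds (including the empty one, i.e.\ the top element).

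Concretely, for item (1) I would unfold $1_X : R \to S$ to $R \le 1_X^*S$ and use preservation of identities to replace $1_X^*S$ by $S$, giving $R \le S$. Item (2) unfolds to $f^*S \le f^*S$, which is reflexivity of the order on $\mathcal{R}_X$. For item (3) I would unfold $gf : R \to S$ to $R \le (gf)^*S$ and apply contravariant functoriality to rewrite the right-hand side as $f^*(g^*S)$, so that the inequality becomes $R \le f^*(g^*S)$, which is by definition $f : R \to g^*S$; both directions of the biconditional are this same rewrite. For item (4) I would unfold $f : R \to \bigcap_{i\in I} S_i$ to $R \le f^*\bigl(\bigcap_{i\in I} S_i\bigr)$; preservation of greatest lower bounds lets me rewrite $f^*\bigl(\bigcap_{i} S_i\bigr)$ as $\bigcap_i f^*S_i$, and then the universal property of the meet in the complete lattice $\mathcal{R}_X$ gives $R \le \bigcap_i f^*S_i$ if and only if $R \le f^*S_i$ for every $i$, i.e.\ $f : R \to S_i$ for every $i$.

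I do not expect a genuine obstacle: every step is a one-line appeal to a definition or to one of the two hypotheses. The only point deserving a moment's care is the edge case $I = \emptyset$ in item (4), where $\bigcap_{i \in \emptyset} S_i$ is the top element of $\mathcal{R}_Y$ and the claim amounts to ``$f : R \to \top$ always holds'', which is exactly the instance of meet-preservation stating $f^*\top = \top$. It is also worth noting in passing that items (1) and (3), together with monotonicity of each $f^*$ (itself a consequence of meet-preservation), are precisely what makes the category $\mathcal{R}$ legitimate, since they yield closure of its morphisms under identities and composition.
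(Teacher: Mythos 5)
Your proposal is correct and matches the paper's intent exactly: the paper offers no explicit proof, remarking only that the lemma follows ``by unfolding definitions,'' and your argument is precisely that unfolding --- using functoriality of $\mathcal{R}$ for items (1) and (3), reflexivity for (2), and preservation of greatest lower bounds by $f^*$ for (4). The observation about the empty-intersection case $f : R \to \top$ is a nice touch, since the paper later relies on exactly this fact in the Banach fixed-point proof.
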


If $\mathcal{R}$ is a relational structure over a \emph{pointed} $\CPO$-category
$\mathcal{C}$, we say that a relation $S \in \mathcal{R}_Y$ is \emph{admissible}
if the following conditions hold for all $R \in \mathcal{R}_X$.  First,
$\bot : R \to S$; second, $\lim_i f_i : R \to S$ whenever
$(f_i : R \to S)_{i \in \N}$ is an increasing sequence of morphisms.  We say
that $\mathcal{R}$ itself is admissible if every relation is admissible.
Intuitively, being admissible means that a relation always holds of diverging
programs and is compatible with recursive program definitions, which are
constructed using limits via Kleene's fixed point theorem.

\begin{example}
  \label{ex:admissible-relational-structure}
  We can adapt \Cref{ex:relational-structure} to obtain an admissible relational
  structure as follows.  First, instead of considering arbitrary CPOs, we just
  consider pointed ones; that is, we take $\mathcal{C} = \CPO_\bot$.  Second,
  instead of considering arbitrary relations, we consider only those that
  contain $\bot$ and are closed under taking limits of chains.
\end{example}

The missing piece in the statement of \Cref{thm:minimal-invariant-relations} is
what it means for a $\CPO$-functor
$F : \mathcal{C}^{op} \times \mathcal{C} \to \mathcal{C}$ to act on
$\mathcal{R}$.  For each $R \in \mathcal{R}_X$ and $S \in \mathcal{R}_Y$, we
assume that there is some $F(R,S) \in \mathcal{R}_{F(X,Y)}$; moreover, if
$f : R' \to R$ and $g : S \to S'$ are morphisms in $\mathcal{R}$, then $F(f,g)$
should be a morphism of type $F(R,S) \to F(R',S')$ in $\mathcal{R}$. (Note the
contravariance on first argument).

We can now sketch the main idea of Pitts' original construction.  For the rest
of the paper, we fix some pointed $\CPO$-category $\mathcal{C}$ equipped with an
admissible relational structure $\mathcal{R}$, a $\CPO$-functor
$F : \mathcal{C}^{op} \times \mathcal{C} \to \mathcal{C}$ with an action on
$\mathcal{R}$, and an object $D$ that satisfies the minimal invariant property
for an isomorphism $i : F(D,D) \cong D$.

\begin{proof}[Proof of \Cref{thm:minimal-invariant-relations}]
  The proof relies on the Knaster-Tarski fixed point theorem: every monotone
  function on a complete lattice has a least fixed point.  Since the mapping
  $R \mapsto (i^{-1})^*F(R,R)$ is not monotone, we need to modify its definition
  a bit.  Pitts' employed the trick of separating covariant and contravariant
  arguments: if we pose $L \triangleq \mathcal{R}_D^{op} \times \mathcal{R}_D$,
  then the function
  \begin{align*}
    \Psi & : L \to L \\
    \Psi(R^{-},R^{+}) & \triangleq ((i^{-1})^*F(R^+,R^-), (i^{-1})^*F(R^-,R^+))
  \end{align*}
  \emph{is} monotone, and we can construct a least fixed point
  $(R_D^{-}, R_D^{+})$.  Note that $(R_D^{+}, R_D^{-})$ is also a fixed point,
  so $(R_D^{-},R_D^{+}) \leq (R_D^+,R_D^-)$ in $L$, and thus
  $R_D^{+} \leq R_D^{-}$.  To finish the proof, we just need to show the reverse
  inequality.  This is where the properties of minimal invariant and the
  relational structures come into play.  We can show that
  $\pi_i : R_D^{-} \to R_D^{+}$ by induction on $i$, which implies, by
  admissibility, that $1 = \lim_i \pi_i : R_D^{-} \to R_D^{+}$.  But this is
  equivalent to $R_D^{-} \leq R_D^+$ by
  \Cref{lem:relational-structure-properties}, from which the result follows.
\end{proof}

\begin{remark}[Uniformity]
  \label{rem:uniformity}
  This proof shows that a stronger result holds: for all
  $i \in \N$,
  \begin{align}
    \label{eq:uniform}
    \pi_i & : R_D \to R_D.
  \end{align}
  Intuitively, this means that the constructed relation $R_D$ still holds after
  we truncate an element of $D$ after $i$ unfoldings.  This property, known as
  \emph{uniformity}, will play an important role in \Cref{sec:banach}, when
  constructing $R_D$ by the Banach fixed-point theorem.
\end{remark}

\begin{remark}
  Pitts' presentation differs from mine in a few respects~\cite{Pitts96}.  What
  I call a relational structure here corresponds to what he calls a relational
  structure with inverse images and intersections.  More importantly, his notion
  of action on a relational structure is different: rather than requiring
  $\mathcal{R}$ to be admissible, he requires $F(R,S)$ to be admissible whenever
  $S$ is.  This is a strengthening of the above notion of action, since it must
  be defined even for relations that are not admissible.  It allows us to
  formulate more useful coinduction principles associated with the relation
  $R_D$, but it does not change the construction of $R_D$ itself, which is why
  we do not consider it here.
\end{remark}

\section{Inverse Limit Construction}
\label{sec:inverse-limit}

In practice, minimal invariants such as $D$ are often obtained with Scott's
\emph{inverse limit construction}.  The method can be seen as an adaptation of
Kleene's fixed-point theorem that accounts for mixed-variance functors, and can
be carried out for many $\CPO$-categories~\cite{Wand79,SmythP82}.  After
reviewing the idea, we will see that Pitts' result,
\Cref{thm:minimal-invariant-relations}, is just an instance of it!

We say that two morphisms $f^e : X \to Y$ and $f^p : Y \to X$ in $\mathcal{C}$
form an \emph{embedding-projection pair} if $f^pf^e = 1_X$ and
$f^ef^p \sqsubseteq 1_Y$.  We can show that each half of the pair uniquely
determines the other.  Embeddings and projections compose, so we can form a
subcategory $\mathcal{C}^e$ consisting of all embeddings, and $\mathcal{C}^p$
consisting of all projections.  With embeddings and projections, we can make
mixed-variance functors more symmetric. Since $F$ is a $\CPO$-functor, its
action on morphisms is monotone, and we can show that $F(f^p,f^e)$ is an
embedding, with $F(f^e, f^p)$ being the corresponding projection.  Thus, $F$
determines a functor $F^e : \mathcal{C}^e \to \mathcal{C}^e$ by posing
$F^e(X) \triangleq F(X,X)$ on objects, and $F^e(f^e) \triangleq F^e(f^p, f^e)$
on morphisms.

Much like Kleene's fixed-point theorem, we'll see that we can build $D$ by
considering a chain of finite iterations of $F^e$ and taking its
colimit---which, in the context of Kleene's construction, would just correspond
to a limit in a CPO.  Since we are dealing with embeddings, colimits behave
particularly symmetrically, a phenomenon known in the literature as the
\emph{limit-colimit coincidence}:

\begin{theorem}[\cite{SmythP82}]
  \label{thm:bilimit-definition}
  Let $X^e : (\N,\leq) \to \mathcal{C}^e$ be a diagram of embeddings, which
  uniquely corresponds to a diagram $X^p : (\N,\geq) \to \mathcal{C}^p$ of
  projections.  Let $A \in \mathcal{C}$. The following conditions are
  equivalent.
  \begin{itemize}
  \item $A$ is a colimit of $X^e$ in $\mathcal{C}$.
  \item $A$ is a limit of $X^p$ in $\mathcal{C}$.
  \item There is a cocone of embeddings $f^e : X^e \to \Delta A$ such that
    $\lim_i f_i^e \circ f_i^p = 1_A$.
  \item There is a cone of projections $f^p : \Delta A \to X^p$ such that
    $\lim_i f_i^e \circ f_i^p = 1_A$.
  \end{itemize}
  In this situation, $f^e : X^e \to \Delta A$ is a colimiting cocone, and
  $f^p : \Delta A \to X^p$ is a limiting cone.  We call the pair $(A,f)$ the
  \emph{bilimit} of $X$.
\end{theorem}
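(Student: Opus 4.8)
The plan is to prove $1 \Leftrightarrow 3$ and $2 \Leftrightarrow 4$ directly, while $3 \Leftrightarrow 4$ comes essentially for free; together these yield the equivalence of all four conditions, and the ``in this situation'' refinements drop out of the proofs of $3 \Rightarrow 1$ and $4 \Rightarrow 2$. For the free equivalence: since each embedding determines its projection and vice versa, a cocone of embeddings $f^e : X^e \to \Delta A$ is the same data as a cone of projections $f^p : \Delta A \to X^p$ (the cocone equations $f^e_{i+1} \circ X^e_{i,i+1} = f^e_i$ translating into the cone equations $X^p_{i+1,i} \circ f^p_{i+1} = f^p_i$), and the side condition $\lim_i f^e_i \circ f^p_i = 1_A$ is literally identical in the two formulations. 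One should first check this limit even makes sense: $f^e_i \circ f^p_i$ is an increasing chain in the CPO $\mathcal{C}(A,A)$, because $f^e_i f^p_i = f^e_{i+1} \, (X^e_{i,i+1} X^p_{i+1,i}) \, f^p_{i+1} \sqsubseteq f^e_{i+1} f^p_{i+1}$, using $X^e_{i,i+1} X^p_{i+1,i} \sqsubseteq 1$ and monotonicity of composition.

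For $3 \Rightarrow 1$ (and the claim that the given $f^e$ is then colimiting): let $h : X^e \to \Delta B$ be any cocone in $\mathcal{C}$, with the $h_i$ arbitrary morphisms, and set $k \triangleq \lim_i h_i \circ f^p_i : A \to B$. This chain is increasing by the computation above (with $h_i$ in place of $f^e_i$), and its least upper bound exists because $\mathcal{C}(A,B)$ is a CPO --- crucially, no a priori upper bound is needed. Using continuity of composition, $k \circ f^e_j = \lim_i h_i f^p_i f^e_j$, and for $i \ge j$ one has $f^p_i f^e_j = f^p_i f^e_i X^e_{j,i} = X^e_{j,i}$, so $h_i f^p_i f^e_j = h_i X^e_{j,i} = h_j$; the tail of the chain is constant, hence $k \circ f^e_j = h_j$. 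For uniqueness, any $k'$ with $k' \circ f^e_j = h_j$ for all $j$ satisfies $k' = k' \circ 1_A = k' \circ \lim_i f^e_i f^p_i = \lim_i h_i f^p_i = k$, again by continuity of composition. So $(A,f^e)$ is a colimit of $X^e$ in $\mathcal{C}$.

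For $1 \Rightarrow 3$: let $g : X^e \to \Delta A$ be a colimiting cocone (the $g_i$ a priori arbitrary). For each $n$ build the cocone $c^{(n)} : X^e \to \Delta X_n$ with $c^{(n)}_i = X^e_{i,n}$ for $i \le n$ and $c^{(n)}_i = X^p_{i,n}$ for $i \ge n$; the cocone equations follow by a short case split on the position of the step $i \to i+1$ relative to $n$, using $X^p_{i+1,i} X^e_{i,i+1} = 1$ at the boundary and functoriality of $X^e$, $X^p$ away from it. The universal property of $A$ gives a unique $f^p_n : A \to X_n$ with $f^p_n \circ g_i = c^{(n)}_i$; in particular $f^p_n \circ g_n = 1_{X_n}$, and comparing $c^{(n)}$ with $X^p_{n+1,n} \circ c^{(n+1)}$ (which agree componentwise, again by a case split) forces $X^p_{n+1,n} \circ f^p_{n+1} = f^p_n$, so $f^p$ is a cone over $X^p$ while $g$ is a cocone over $X^e$ of matching shape. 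Then $g_n f^p_n$ is an increasing chain (via $g_n = g_{n+1} X^e_{n,n+1}$ and $f^p_n = X^p_{n+1,n} f^p_{n+1}$), with lub $e : A \to A$; each $g_n f^p_n \sqsubseteq e$, and $e \circ g_j = \lim_n g_n f^p_n g_j$ has tail constantly $g_j$ (since $f^p_n g_j = c^{(n)}_j = X^e_{j,n}$ for $n \ge j$), so $e$ mediates $g$ to itself and therefore $e = 1_A$ by uniqueness of the mediating morphism. Hence $f^p_n g_n = 1_{X_n}$ and $g_n f^p_n \sqsubseteq 1_A$, so each $g_n$ is an embedding with projection $f^p_n$, and $\lim_n g_n f^p_n = 1_A$ --- this is condition $3$, with $f^e = g$ the original colimiting cocone. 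Finally $2 \Leftrightarrow 4$ is the instance of $1 \Leftrightarrow 3$ applied in the $\CPO$-category $\mathcal{C}^{op}$, in which $X^p$ becomes a diagram of embeddings, a limit of $X^p$ in $\mathcal{C}$ is a colimit in $\mathcal{C}^{op}$, and a cone of projections over $X^p$ in $\mathcal{C}$ is a cocone of embeddings over $X^p$ in $\mathcal{C}^{op}$; the dual of $3 \Rightarrow 1$ likewise shows the given $f^p$ is a limiting cone.

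I expect the main obstacle to be the index bookkeeping in $1 \Rightarrow 3$: pinning down the auxiliary cocones $c^{(n)}$ correctly across the $i \le n$ versus $i \ge n$ boundary, verifying they are genuine cocones, and then checking that the induced mediating morphisms $f^p_n$ assemble into an embedding--projection structure with $\lim_n g_n f^p_n = 1_A$. Throughout, one must invoke \emph{continuity} of composition (not merely monotonicity) whenever a limit is commuted past a composite, and keep in mind that a $\CPO$-category furnishes least upper bounds of arbitrary chains of morphisms --- precisely what makes the mediating morphism $k = \lim_i h_i f^p_i$ in $3 \Rightarrow 1$ well-defined without assuming any bound.
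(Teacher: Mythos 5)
The paper does not actually prove this theorem: it is quoted from Smyth--Plotkin with a citation and used as a black box, so there is no in-paper argument to compare against. Your proof is, as far as I can check, correct and complete, and it is essentially the classical limit--colimit coincidence argument from the cited source. The decomposition is sensible: $3\Leftrightarrow 4$ is indeed free once you observe that the projection half of a composite of embeddings is the reverse composite of the projection halves (so the cocone identities for $f^e$ and the cone identities for $f^p$ determine each other), and $2\Leftrightarrow 4$ really is the instance of $1\Leftrightarrow 3$ in $\mathcal{C}^{op}$, which the paper has already noted is again a $\CPO$-category. The two substantive directions are handled properly: in $3\Rightarrow 1$ you correctly observe that the mediating morphism $k=\lim_i h_i f_i^p$ needs no a priori bound because hom-CPOs have lubs of all increasing chains, and you invoke continuity (not just monotonicity) of composition exactly where a lub is pushed through a composite; in $1\Rightarrow 3$ the auxiliary cocones $c^{(n)}$ and the identification $f_n^p g_n = 1$, $g_n f_n^p\sqsubseteq \lim_m g_m f_m^p = 1_A$ are the standard way to show that an arbitrary colimiting cocone over a chain of embeddings automatically consists of embeddings. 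Two cosmetic points only: the identity $X^p_{i+1,i}X^e_{i,i+1}=1$ is used for every step with $i\geq n$ when verifying that $c^{(n)}$ is a cocone, not just ``at the boundary''; and when you conclude $e g_j = g_j$ or $k f^e_j = h_j$ from an eventually constant increasing chain, you are tacitly using that such a chain's lub is its eventual value, which is worth a half-sentence. Neither affects correctness.
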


Because of this result, we can show that $F^e$ preserves bilimits of chains of
embeddings in $\mathcal{C}$.  Then, constructing the fixed point of $F$ becomes
simply a matter of adapting the proof of Kleene's fixed-point theorem.

\begin{theorem}[\cite{SmythP82}]
  \label{thm:inverse-limit-construction}
  Suppose that $\mathcal{C}$ has bilimits of chains of embeddings. Then $F$ has
  a minimal invariant $i : F(D,D) \cong D$.
\end{theorem}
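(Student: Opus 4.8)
The plan is to mirror the classical proof of Kleene's fixed-point theorem, but carried out in $\mathcal{C}^e$ with bilimits playing the role that directed suprema play in a CPO. First I would form the diagram $X^e : (\N,\leq) \to \mathcal{C}^e$ whose object at stage $n$ is $(F^e)^n(1)$, the $n$-fold application of $F^e$ to the terminal object, and whose connecting embedding $X^e_0 : 1 \to F^e(1)$ is the unique embedding out of the terminal object (such an embedding exists and is unique in a pointed $\CPO$-category: its projection is $\bot$, and $\bot \circ (\text{the embedding}) = 1_1$ holds because $1$ is terminal, while $(\text{the embedding}) \circ \bot \sqsubseteq 1_{F^e(1)}$ is exactly the minimal invariant-style inequality one gets for free), with the remaining maps obtained by iterating $F^e$ on this one. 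By hypothesis $\mathcal{C}$ has the bilimit $(D, f)$ of this chain; set $i$ to be the canonical comparison morphism $F(D,D) = F^e(D) \to D$ obtained from the fact that $F^e(D)$ is the bilimit of the shifted chain $X^e \circ (+1)$, which has the same colimit $D$.

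The key steps, in order, are: (i) check that $F^e$ preserves the bilimit of the chain $X^e$ — this follows from \Cref{thm:bilimit-definition} together with continuity of $F$ on hom-CPOs, since applying $F^e$ to the colimiting cocone $f^e$ of embeddings yields a cocone $F^e(f^e)$ whose induced pair satisfies $\lim_i F^e(f_i^e) \circ F^e(f_i^p) = F^e(\lim_i f_i^e \circ f_i^p) = F^e(1_D) = 1_{F^e(D)}$, so by the theorem $F^e(D)$ is again a bilimit of $F^e(X^e)$; (ii) identify $F^e(X^e)$ with the shift $X^e \circ (+1)$ of the original chain, which is cofinal, hence has the same bilimit $D$; this gives the iso $i : F(D,D) \cong D$ in $\mathcal{C}^e$, and an iso in $\mathcal{C}^e$ is in particular an iso in $\mathcal{C}$. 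Finally (iii) verify the minimal invariant property: unfolding the definition, $\Phi^n(\bot)$ is exactly $f_n^e \circ f_n^p$ (prove by induction on $n$, using that $i$ is the comparison map and that $f_{n+1}^e \circ f_{n+1}^p = F^e$ applied in the appropriate sense to $f_n^e \circ f_n^p$, conjugated by $i$), and then the bilimit condition $\lim_n f_n^e \circ f_n^p = 1_D$ from \Cref{thm:bilimit-definition} is precisely $\lim_n \pi_n = 1_D$.

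I expect the main obstacle to be step (iii), the bookkeeping that shows $\pi_n = \Phi^n(\bot)$ coincides with the bilimit projections $f_n^e \circ f_n^p$. The base case needs $\bot = f_0^e \circ f_0^p$, which holds because $f_0^p : D \to 1$ is forced to be $\bot$ (it lands in the terminal object) and composing on either side with $\bot$ in a pointed $\CPO$-category gives $\bot$; the inductive step requires carefully matching the recursion $\Phi(\pi) = i \circ F(\pi,\pi) \circ i^{-1}$ against the way the connecting maps of the shifted diagram relate to those of the original, i.e.\ that the comparison iso $i$ intertwines $F^e$ applied to the stage-$n$ idempotent with the stage-$(n+1)$ idempotent. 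Everything else — preservation of the bilimit, cofinality of the shift, transporting isos from $\mathcal{C}^e$ to $\mathcal{C}$ — is routine once \Cref{thm:bilimit-definition} is in hand.
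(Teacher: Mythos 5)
Your proposal is correct and follows essentially the same route as the paper's proof: build the chain $(F^e)^n(1)$ starting from the bottom embedding out of $1$, take its bilimit $D$, use continuity of $F$ on hom-CPOs together with \Cref{thm:bilimit-definition} to see that $F^e$ preserves the bilimit so that the shifted chain yields the isomorphism $i : F(D,D) \cong D$, and then identify $\Phi^n(\bot)$ with the idempotents $f_n^e \circ f_n^p$ by induction to conclude $\lim_n \pi_n = 1_D$. The only cosmetic slip is writing $F^e(\lim_i f_i^e \circ f_i^p)$, since $F^e$ is only defined on embeddings; the precise statement is $\lim_i F(f_i^e f_i^p, f_i^e f_i^p) = F(1_D,1_D) = 1_{F(D,D)}$, which is exactly the continuity you invoke.
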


\begin{proof}
  Let $X_i = (F^e)^i(1)$.  Since $\mathcal{C}$ is pointed, there is an embedding
  $f_0^e = \bot : 1 \to X_1$.  By iterating $F^e$ on $f_0^e$, we can construct a
  sequence of embeddings $X_i \to X_{i+1}$.  By hypothesis, this chain has a
  bilimit, which we call $g^e : X \to \Delta D$.  Since $F^e$ preserves bilimits
  of embeddings, we know that
  $F^e(g) : F^e(X) = (X_i)_{i \geq 1} \to \Delta F^e(D)$ is a bilimit.  Note
  that $1$ is an initial object of $\mathcal{C}^e$, so we can extend this cocone
  to $h^e : X \to \Delta F^e(D)$ by posing
  \begin{align*}
    h_0^e & : 1 \to F^e(D) \\
    h_0^e & \triangleq \bot \\
    h_{i+1}^e & : F^e(X_i) \to F^e(D) \\
    h_{i+1}^e & \triangleq F^e(g_i).
  \end{align*}
  Since both $F(D,D)$ and $D$ satisfy the same universal property, we get an
  isomorphism $i : F(D,D) \cong D$.  The construction of this isomorphism
  implies, for every $j \in \N$,
  \begin{align*}
    i \circ h_{j+1}^e & = i \circ F(g_j^p, g_j^e) = g_{j+1}^e.
  \end{align*}
  Taking projections on both sides, we obtain
  \begin{align*}
    F(g_j^e,g_j^p) \circ i^{-1} & = g_{j+1}^p.
  \end{align*}
  Combining the two equations, we find
  \begin{align*}
    g_{j+1}^e \circ g_{j+1}^p & = i \circ F(g_j^pg_j^e,g_j^eg_j^p) \circ i^{-1}.
  \end{align*}
  Since $g_0^e = \bot$ and $g_0^p = \bot$, this implies that
  $\pi_j \triangleq g_j^e \circ g_j^p$ satisfies exactly the same equations as
  the projection functions used in the definition of the minimal invariant
  property.  By \Cref{thm:bilimit-definition}, the limit of this sequence is
  the identity on $D$, so $i : F(D,D) \cong D$ indeed satisfies the minimal
  invariant property.
\end{proof}

To see how this relates to Pitts' construction, note that $\mathcal{R}$ can also
be seen as a pointed $\CPO$-category, and the projection
$p : \mathcal{R} \to \mathcal{C}$ preserves this structure.  Indeed,
admissibility means that the morphisms of $\mathcal{R}$ have the structure of a
pointed CPO inherited from the morphisms of $\mathcal{C}$.  The terminal object
of $\mathcal{R}$ is just the terminal object of $\mathcal{C}$ equipped with the
greatest relation on $\mathcal{R}_1$, which exists because $\mathcal{R}_1$ is a
complete lattice.  Moreover, if we see each relation $F(R,S)$ as an object of
$\mathcal{R}$, then the action of $F$ on $\mathcal{R}$ can be described
equivalently as a $\CPO$-functor of type
$\mathcal{R}^{op} \times \mathcal{R} \to \mathcal{R}$ making the following
diagram commute:
\begin{center}
  \begin{tikzcd}
    \mathcal{R}^{op} \times \mathcal{R} \ar{r} \ar{d}
    & \mathcal{R} \ar{d} \\
    \mathcal{C}^{op} \times \mathcal{C} \ar{r}{F}
    & \mathcal{C}.
  \end{tikzcd}
\end{center}
To apply the inverse limit construction to this lifted functor, we just need to
show that $\mathcal{R}$ has bilimits of chains of embeddings.

\begin{lemma}
  \label{lem:bilimits-relational-structures}
  Admissible relational structures create bilimits.  That is, if $X$ is a chain
  of embeddings in $\mathcal{R}$, and $f : pX \to \Delta L$ is a bilimit in
  $\mathcal{C}$, then $f : X \to \Delta R$ is a bilimit in $\mathcal{R}$, where
  $R \in \mathcal{R}_L$ is defined as
  \begin{align*}
    R & \triangleq \bigcap_n (f_n^p)^*X_n.
  \end{align*}
\end{lemma}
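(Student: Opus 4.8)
The plan is to lift the given bilimit cocone $f : pX \to \Delta L$ to a cocone in $\mathcal{R}$ with apex $(L,R)$ and then invoke \Cref{thm:bilimit-definition} inside $\mathcal{R}$ itself, which is a pointed $\CPO$-category whose hom-CPOs are, by admissibility, sub-CPOs of the corresponding hom-CPOs of $\mathcal{C}$ (so limits of increasing chains of morphisms computed in $\mathcal{R}$ agree with those computed in $\mathcal{C}$). Three facts are needed: (i) each projection $f_n^p$ lifts to a morphism $(L,R) \to X_n$ of $\mathcal{R}$; (ii) each embedding $f_n^e$ lifts to a morphism $X_n \to (L,R)$ of $\mathcal{R}$; and (iii) $\lim_i f_i^e \circ f_i^p = 1_{(L,R)}$ in $\mathcal{R}$. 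Fact (i) is immediate, and explains why $R$ is defined through the projections rather than the embeddings: since $R = \bigcap_n (f_n^p)^* X_n \leq (f_n^p)^* X_n$, \Cref{lem:relational-structure-properties} gives $f_n^p : R \to X_n$ at once.

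Fact (ii) carries the content. I want $f_n^e : X_n \to R$; by parts (3) and (4) of \Cref{lem:relational-structure-properties} this reduces to $f_m^p \circ f_n^e : X_n \to X_m$ in $\mathcal{R}$ for every $m$. The key point is that $f_m^p \circ f_n^e$ collapses to a structure map of the chain once the bilimit identities $f_k^p \circ f_k^e = 1$ are used to cancel: for $n \leq m$ it equals the composite embedding $X_n \to X_m$, and for $m \leq n$ the composite projection $X_n \to X_m$. Since the chain lives in $\mathcal{R}^e$, both are morphisms of $\mathcal{R}$; the composite projections are, because each chain embedding --- being an embedding in $\mathcal{R}$ --- has a projection in $\mathcal{R}$, and by faithfulness of $p$ together with the uniqueness of projections this projection must lie over the corresponding projection in $\mathcal{C}$. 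Carrying out this reduction, and checking that the chain's projections really are available in $\mathcal{R}$, is the main obstacle; everything else is bookkeeping.

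Given (i) and (ii), $(f_n^e, f_n^p)$ is an embedding-projection pair in $\mathcal{R}$ and $f^e : X \to \Delta(L,R)$ is a cocone of embeddings, since the equations $f_n^p f_n^e = 1$, $f_n^e f_n^p \sqsubseteq 1$ and the cocone triangles are inherited from $\mathcal{C}$. Each $f_i^e \circ f_i^p$ is now a morphism $(L,R) \to (L,R)$ of $\mathcal{R}$, and the sequence is increasing, so by admissibility its limit lies in $\mathcal{R}$; computed in $\mathcal{C}(L,L)$ it is $1_L$, hence it is $1_{(L,R)}$ in $\mathcal{R}$, giving (iii). \Cref{thm:bilimit-definition} applied in $\mathcal{R}$ then yields that $f : X \to \Delta R$ is a bilimit. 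For the stronger ``creates'' statement, any other apex $R'$ over $L$ satisfies $R' \leq R$ --- from $f_n^p : R' \to X_n$ and the definition of $R$ --- and also $R \leq R'$, since admissibility applied to the increasing sequence $f_i^e \circ f_i^p : R \to R'$ exhibits $1_L$ as a morphism $R \to R'$, i.e.\ $R \leq R'$ by \Cref{lem:relational-structure-properties}; thus $R' = R$ and $p$ genuinely creates the bilimit.
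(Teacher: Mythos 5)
Your proof is correct, but it takes a genuinely different path through \Cref{thm:bilimit-definition} than the paper does. The paper verifies the second characterization of bilimits---that $R$ is a \emph{limit} of the projection diagram in $\mathcal{R}$---by exhibiting a natural bijection between morphisms $T \to R$ and cones $\Delta T \to X$, which falls out of parts (2)--(4) of \Cref{lem:relational-structure-properties} applied to the intersection defining $R$; admissibility enters only implicitly, to make $\mathcal{R}$ a pointed $\CPO$-category so that \Cref{thm:bilimit-definition} is available there at all. You instead verify the third/fourth characterization: you lift the embeddings and projections to $\mathcal{R}$ by hand (your fact (ii), reducing $f_n^e : X_n \to R$ to the structure maps $f_m^p f_n^e$ of the chain and checking that the chain's projections live in $\mathcal{R}$, is the step the paper's route avoids entirely) and then use admissibility of $R$ directly to conclude $\lim_i f_i^e \circ f_i^p = 1_R$ in $\mathcal{R}$. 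The paper's argument is shorter and closer to the standard ``lifting limits along a fibration'' style; yours makes the role of admissibility visible---it is exactly what lets the equation $\lim_i f_i^e \circ f_i^p = 1$ descend from $\mathcal{C}$ to $\mathcal{R}$---and, as a bonus, your final paragraph actually proves the uniqueness half of ``creates'' (any apex $R'$ over $L$ to which the cone and cocone both lift must equal $R$), which the paper's statement suggests but its proof does not spell out.
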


\begin{proof}
  By \Cref{thm:bilimit-definition}, it suffices to show that the corresponding
  cone of projections is a limiting cone in $\mathcal{R}$.  We'll show that
  there is a bijective correspondence between morphisms of type $T \to R$ and
  cones of type $\Delta T \to X$ in $\mathcal{R}$ that is natural in $T$.

  In one direction, suppose that $g : T \to \bigcap_n (f_n^p)^*X_n$ is a
  morphism in $\mathcal{R}$.  This means that $f_n^pg : T \to X_n$ is a morphism
  for every $n \in \N$, and we can check that they form a cone $\Delta T \to X$.
  Conversely, suppose that we are given a cone $g : \Delta T \to X$.  By
  projecting this cone onto $\mathcal{C}$, we obtain another cone
  $pg : \Delta pT \to pX$.  Since $f^p : \Delta L \to pX$ is limiting, there is
  a unique mediating morphism $g' : pT \to L$.  Moreover, for every $n \in N$,
  we have $f_n^pg' = g_n$.  Since $g_n : T \to X_n$ by hypothesis, this means
  that $g' : T \to (f_n^p)^*X_n$ for every $n \in \N$.  Thus,
  $g' : T \to \bigcap_n (f_n^p)^*X_n = R$, and the mediating morphism can be
  lifted as expected.  After checking that this is natural in $T$, we conclude.
\end{proof}

\begin{remark}[A dual characterization]
  Since each inverse image function $f^* : \mathcal{R}_Y \to \mathcal{R}_X$
  preserves intersections and relations form complete lattices, we can build a
  corresponding left adjoint $f_! : \mathcal{R}_X \to \mathcal{R}_Y$, called the
  \emph{direct image by $f$}.  This allows us to find an alternative
  characterization of the bilimit $R$ above, by dualizing the proof.
  \[ R = \bigcup_n (f_n^e)_!X_n. \]
  Here, $\bigcup$ refers to the supremum of a family of relations, which exists
  by completeness.
\end{remark}

\begin{corollary}
  \label{cor:inverse-limit-lifting}
  If $\mathcal{C}$ satisfies the hypotheses of
  \Cref{thm:inverse-limit-construction}, then so does $\mathcal{R}$.
\end{corollary}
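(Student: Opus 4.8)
The plan is to observe that almost all the work has already been done in this section, so the corollary is a matter of assembling it. To say that $\mathcal{R}$ satisfies the hypotheses of \Cref{thm:inverse-limit-construction} amounts to three things: (i) $\mathcal{R}$ is a pointed $\CPO$-category; (ii) there is a $\CPO$-functor $\mathcal{R}^{op} \times \mathcal{R} \to \mathcal{R}$ lifting $F$ along $p$; and (iii) $\mathcal{R}$ has bilimits of chains of embeddings. Items (i) and (ii) were already observed above: admissibility of $\mathcal{R}$ makes each hom-set $\mathcal{R}((X,R),(Y,S))$ a pointed sub-CPO of $\mathcal{C}(X,Y)$ (it is closed under $\bot$ and under limits of increasing chains of morphisms), composition is continuous because it is inherited from $\mathcal{C}$, and the terminal object is $1$ equipped with the greatest element of the complete lattice $\mathcal{R}_1$; while the assumption that $F$ acts on $\mathcal{R}$ is, as the commuting square above records, exactly the data of such a lifted $\CPO$-functor. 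Being a $\CPO$-functor, its embedding part preserves bilimits of chains of embeddings just as $F^e$ does, so no further hypothesis is needed on that count.

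So the only thing to argue is (iii). First I would check that $p : \mathcal{R} \to \mathcal{C}$ preserves embedding-projection pairs: since $p$ is the identity on morphisms and preserves the order on hom-sets, the identities $f^p f^e = 1$ and $f^e f^p \sqsubseteq 1$ hold in $\mathcal{C}$ whenever they hold in $\mathcal{R}$. Consequently any chain of embeddings $X$ in $\mathcal{R}$ projects to a chain of embeddings $pX$ in $\mathcal{C}$, which has a bilimit $f : pX \to \Delta L$ by the hypothesis on $\mathcal{C}$; \Cref{lem:bilimits-relational-structures} then lifts this to a bilimit $f : X \to \Delta R$ in $\mathcal{R}$ with $R = \bigcap_n (f_n^p)^* X_n$. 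Together with (i) and (ii), this is precisely what \Cref{thm:inverse-limit-construction} requires.

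I do not expect a real obstacle here; the only subtlety is a bookkeeping one, namely that \Cref{lem:bilimits-relational-structures} takes as input a chain of embeddings in $\mathcal{R}$ \emph{together with} a chosen bilimit of its projection, so the argument has to produce that bilimit first --- which is exactly where the hypothesis on $\mathcal{C}$, plus the remark that $p$ preserves embeddings, get used. As a payoff worth recording, feeding $\mathcal{R}$ into \Cref{thm:inverse-limit-construction} and then projecting along $p$ gives a second proof of \Cref{thm:minimal-invariant-relations}: the minimal invariant of the lifted functor produced in $\mathcal{R}$ lies over a minimal invariant of $F$, which must be $D$ by uniqueness, so its relation component is an $R_D \in \mathcal{R}_D$ with $R_D = (i^{-1})^* F(R_D, R_D)$ --- equality here, since an isomorphism in $\mathcal{R}$ lying over one in $\mathcal{C}$ forces it.
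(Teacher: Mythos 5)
Your proposal is correct and takes essentially the same route as the paper, which leaves the proof implicit: the pointed $\CPO$-category structure on $\mathcal{R}$ and the lifted $\CPO$-functor are exactly the observations made in the paragraph preceding \Cref{lem:bilimits-relational-structures}, and that lemma supplies the bilimits of chains of embeddings (with your remark that $p$ sends embeddings to embeddings being the small, correct piece of glue needed to invoke it). Your closing observation about projecting the minimal invariant of the lifted functor back down is likewise exactly the paper's subsequent ``inverse limit construction'' proof of \Cref{thm:minimal-invariant-relations}.
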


This leads to an alternative strategy for constructing recursive relations.

\begin{proof}[Proof of \Cref{thm:minimal-invariant-relations}; inverse limit construction]
  Thanks to \Cref{cor:inverse-limit-lifting}, we can apply
  \Cref{thm:inverse-limit-construction} to the lifting of $F$ in $\mathcal{R}$,
  and build a minimal invariant object $i_R : F(R,R) \cong R$ in $\mathcal{R}$.
  We can check that $pR$ is just $D$ up to isomorphism, since both are built as
  bilimits and those are preserved by $F$.  Thus, we might as well assume that
  $D = pR$ and $i_R = i$. The fact that $i$ is an isomorphism implies that
  $R \leq (i^{-1})^*F(R,R)$.  To conclude, we just need to show the reverse
  inequality.  We know that $i^{-1} : (i^{-1})^*F(R,R) \to F(R,R)$.  Since
  $i : F(R,R)\to R$, we find by composition that $1 : (i^{-1})^*F(R,R) \to R$,
  and we conclude that $R = (i^{-1})^*F(R,R)$.
\end{proof}

\section{Banach Fixed Point}
\label{sec:banach}

In the last two decades, \emph{guarded recursion} has emerged as a popular
method for defining recursive relations.  While originally developed for
reasoning about denotational semantics~\cite{Nakano00}, it was shortly after
adapted to the operational setting, where it proved to be a convenient interface
to \emph{step-indexed reasoning}~\cite{AppelM01,AppelMRV07}.

The basic idea is to work with a family of relations $(R_n)_{n \in \N}$.  In the
case of step indexing, $R_n$ represents a property that holds of terms of a
language within at most $n$ steps of computation, such as ``if the term
terminates in at most $n$ steps, then it is a value of type $\mathsf{bool}$''.
It is always possible to define such a family recursively if each $R_n$ depends
only on the values of $R_m$, for each $m < n$.  Manipulating such indices
directly quickly becomes cumbersome, so guarded recursion encapsulates this
process in a modality $\later$, usually known as ``later''.  Then, any recursive
definition becomes valid, as long as recursive occurrences of the relation
appear under $\later$.

After reviewing the basics of guarded recursion, we will see how it leads to an
alternative proof of \Cref{thm:minimal-invariant-relations}.  First, we need a
general, abstract setting where guarded definitions can be formulated.

\begin{definition}
  An \emph{ordered family of equivalences} (OFE) is a tuple
  $(X, (\stackrel{n}{=})_{n \in \N})$, where $X$ is a set,
  $(\stackrel{0}{=}) \supset (\stackrel{1}{=}) \supset \cdots$ is a decreasing
  sequence of equivalence relations on $X$, and $\stackrel{0}{=}$ is the total
  relation on $X$.  The family should converge to the identity on $X$:
  $\bigcap_n (\stackrel{n}{=}) = (=)$ or, equivalently,
  $(\forall n. x \stackrel{n}{=} y) \Rightarrow x = y$ for any $x, y \in X$.
\end{definition}

We will soon see examples of OFEs connected to the denotational models we have
been studying so far.  Before we get there, however, let us go back to the
example sketched above: an indexed family of relations on terms $t \in T$.  The
set of such indexed families forms an OFE: we say that $R \stackrel{n}{=} S$ if
and only if $R_m = S_m$ for any $m < n$.  Intuitively, this means that the two
relations are equivalent if we restrict ourselves to strictly less than $n$
steps of computation.  To define fixed points in such an abstract setting, we
require slightly more structure of OFEs.

\begin{definition}
  Let $X$ be an OFE. A \emph{Cauchy sequence} on $X$ is a sequence of elements
  $x : \N \to X$ such that, for every $n \in \N$, there exists $m \in \N$ such
  that, for any $i, j \geq m$, we have $x_i \stackrel{n}{=} x_j$.  We say that
  $X$ is a \emph{complete OFE} (COFE) if, for every Cauchy sequence $x$, there
  exists some (necessarily unique) $\lim x \in X$ such that, for every
  $n \in N$, there exists $m \in \N$ such that $x_i \stackrel{n}{=} \lim x$ for
  every $i \geq m$.
\end{definition}

We can show that the OFE of relations sketched above is complete.  Intuitively,
if we look at the $n$th level of the terms of a Cauchy sequence, they will
eventually stabilize at some $R_n$, and we can take the family of such $R_n$ to
be the limit of the sequence.

\begin{theorem}[Banach Fixed Point]
  \label{thm:banach}
  Suppose that a function $f : X \to X$ on a COFE is \emph{contractive}; that
  is, if $x \stackrel{n}{=} y$, then $f(x) \stackrel{n+1}{=} f(y)$.  Suppose,
  moreover, that there exists some $x_0 \in X$.  The sequence
  $x_i \triangleq f^i(x_0)$ is a Cauchy sequence, and its limit is the unique
  fixed point of $f$; $f(\lim x) = x$.
\end{theorem}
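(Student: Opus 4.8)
The plan is to prove the standard Banach-style argument, but bookkeeping the discrete ``$\stackrel{n}{=}$'' indices instead of a real-valued metric. First I would show that the iterates $x_i \triangleq f^i(x_0)$ form a Cauchy sequence. The key observation is a contractivity estimate: by induction on $k$, I claim that for all $i, j$ with $i, j \geq k$ we have $x_i \stackrel{k}{=} x_j$. The base case $k = 0$ is immediate since $\stackrel{0}{=}$ is the total relation. For the inductive step, suppose $i, j \geq k+1$; then $i - 1, j - 1 \geq k$, so by the inductive hypothesis $x_{i-1} \stackrel{k}{=} x_{j-1}$, and since $f$ is contractive, $x_i = f(x_{i-1}) \stackrel{k+1}{=} f(x_{j-1}) = x_j$. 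This estimate shows that the ``$m$'' required in the definition of a Cauchy sequence can be taken to be $n$ itself, so $x$ is Cauchy.

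Since $X$ is a COFE, $x$ has a limit $\ell \triangleq \lim x$. Next I would verify that $\ell$ is a fixed point of $f$, i.e.\ $f(\ell) = \ell$. Because $X$ is a COFE it suffices to show $f(\ell) \stackrel{n}{=} \ell$ for every $n$ (then apply convergence to the identity). Fix $n$; I want $f(\ell) \stackrel{n}{=} \ell$. If $n = 0$ this is trivial. For $n = n'+1$: by the limit property there is $m$ with $x_i \stackrel{n'}{=} \ell$ for all $i \geq m$; by contractivity $f(x_i) = x_{i+1} \stackrel{n'+1}{=} f(\ell)$ for all $i \geq m$. On the other hand, again by the limit property (enlarging $m$ if necessary) $x_{i+1} \stackrel{n'+1}{=} \ell$ for all $i \geq m$. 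Combining these via transitivity and symmetry of $\stackrel{n'+1}{=}$ gives $f(\ell) \stackrel{n'+1}{=} \ell$, i.e.\ $f(\ell) \stackrel{n}{=} \ell$. Since this holds for all $n$, convergence to the identity yields $f(\ell) = \ell$.

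Finally I would show uniqueness: if $f(y) = y$ and $f(z) = z$, then I claim $y \stackrel{n}{=} z$ for all $n$, by induction on $n$. The case $n = 0$ is the total relation. For $n = n'+1$: by the inductive hypothesis $y \stackrel{n'}{=} z$, so by contractivity $y = f(y) \stackrel{n'+1}{=} f(z) = z$. Hence $y \stackrel{n}{=} z$ for all $n$, and convergence to the identity gives $y = z$. (The same argument shows, more generally, that a contractive map has at most one fixed point, so in particular $\ell$ is \emph{the} fixed point.)

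I do not expect a genuine obstacle here; the proof is routine once one has the right induction. The only mild subtlety is being careful that the COFE limit property gives convergence ``eventually'' (for each $n$ there is a threshold $m$) rather than uniformly, so in the fixed-point step one must pick a single $m$ large enough to make two separate approximation facts hold simultaneously — handled by taking a max. Everything else is a direct transcription of the classical contraction-mapping argument into the index calculus of OFEs.
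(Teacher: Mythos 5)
Your proof is correct. Note, though, that the paper does not actually prove \Cref{thm:banach} directly: the remark immediately following it indicates the intended route, namely to transport the problem to a complete metric space via $d(x,y) = 2^{-n}$ (where $n$ is the largest index with $x \stackrel{n}{=} y$), observe that OFE-contractivity gives $d(f(x),f(y)) \leq \frac12 d(x,y)$, and invoke the classical Banach fixed-point theorem. Your argument instead transcribes the contraction-mapping proof directly into the index calculus: the induction showing $x_i \stackrel{k}{=} x_j$ for $i,j \geq k$, the two-threshold argument for $f(\ell) \stackrel{n}{=} \ell$, and the induction for uniqueness are all sound, and the one subtlety you flag (taking a max of the two convergence thresholds) is handled correctly. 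What your approach buys is self-containment — no appeal to real-valued metrics or the classical theorem, and explicit quantitative thresholds ($m = n$ suffices for Cauchyness) — which is exactly the style used when these results are formalized for step-indexed models. What the paper's route buys is brevity and the conceptual point, which the remark is there to make, that OFEs are just a reformulation of (ultra)metric spaces, so that the classical theory applies off the shelf.
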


\begin{remark}[Connection to metric spaces]
  Each OFE $X$ gives rise to a metric space as follows: $d(x,y) = 2^{-n}$, where
  $n$ is the greatest $n$ such that $x \stackrel{n}{=} y$ holds (if there is no
  such $n$, then $x = y$, and we set $d(x,y) = 0$).  If $X$ is complete, then
  the resulting metric space is also complete.  If $f : X \to X$ is contractive,
  in the sense of \Cref{thm:banach}, then $d(f(x),f(y)) \leq \frac12 d(x,y)$,
  implying that $f$ is contractive in the traditional sense of metric spaces.
  This requirements guarantee that the usual metric formulation of Banach's
  fixed point theorem applies.
\end{remark}

\begin{remark}[Defining later]
  Given a family of relations $(R_n)_{n \in \N}$ as above, we can define another
  family $\later R$ by shifting $R$ by 1: $(\later R)_0 = T \times T$, and
  $(\later R)_{n+1} = R_n$.  To define a contractive function on families of
  relations, it suffices to consider functions of the form $f(R) = g(\later R)$,
  where $g$ is \emph{non-expansive}, which means that it preserves each relation
  $\stackrel{n}{=}$.  In this case, we say that the definition of $f$ is
  \emph{guarded}, which explains the connection to guarded recursion alluded to
  above.  Similar definitions of $\later$ can be stated for other types of
  relations.  Though it will not play a major role in what follows, guardedness
  is often a convenient way of checking that a definition is contractive
  (e.g. in a type theory).
\end{remark}

To apply \Cref{thm:banach} to construct recursive relations, we need to show
that relations on a minimal invariant $D$ form a COFE.  To this end, we restrict
ourselves to \emph{uniform relations}, which are the $R \in \mathcal{R}_D$ such
that $\pi_i : R \to R$ for every $i$.  As noted in \Cref{rem:uniformity}, the
relation that we aim to build is known to be uniform, so there is no harm in
restricting our search space to require uniformity from the start. We let
$\mathcal{U} \subset \mathcal{R}_D$ denote the set of uniform relations.

The reason for focusing on uniform relations is that they are entirely
determined by their inverse images by each of the $\pi_n$.  Indeed, if
$\pi_n^*R = \pi_n^*S$, then $R \leq \pi_n^*R = \pi_n^*S$, where the first
inequality holds by uniformity.  If this holds for every $n$, by admissibility,
$1 = (\lim_n \pi_n) : R \to S$ and $R \leq S$.  An analogous reasoning shows
that $S \leq R$, and we conclude that $R = S$.  This property, in turn, helps us
define a COFE structure over $\mathcal{U}$.

\begin{lemma}
  \label{lem:uniform-cofe}
  If $R$ and $S$ are uniform, then the following conditions are equivalent for
  every $n \in \N$:
  \begin{itemize}
  \item $\pi_n^*R = \pi_n^*S$
  \item $\pi_n : R \to S$ and $\pi_n : S \to R$.
  \end{itemize}

  If one of these conditions holds, we say that $R \stackrel{n}{=} S$.  This
  assignment endows the set $\mathcal{U}$ with the structure of a COFE.
\end{lemma}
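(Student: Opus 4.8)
The plan is to verify, in order, the three things the statement asserts: the equivalence of the two conditions, that $\stackrel{n}{=}$ as defined by either of them is an OFE, and that this OFE is complete.

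For the equivalence of the two conditions, I would argue both directions using \Cref{lem:relational-structure-properties} and uniformity. Suppose $\pi_n : R \to S$ and $\pi_n : S \to R$. By item~2 of \Cref{lem:relational-structure-properties} we always have $\pi_n : \pi_n^*R \to R$, and composing with $\pi_n : R \to S$ (using item~3, or just composition in $\mathcal{R}$) gives $\pi_n : \pi_n^* R \to S$, hence $\pi_n^*R \leq \pi_n^*(\pi_n^*S)$. Here I would use that $\pi_n$ is idempotent — since the $\pi_i$ satisfy the minimal-invariant recursion, one shows $\pi_n \circ \pi_n = \pi_n$ (a standard fact about these projections: $\pi_n$ truncates to $n$ unfoldings, and truncating twice is the same as once) — so $\pi_n^*(\pi_n^*S) = (\pi_n\circ\pi_n)^*S = \pi_n^*S$, giving $\pi_n^*R \leq \pi_n^*S$; the symmetric argument gives the reverse inclusion. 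Conversely, if $\pi_n^*R = \pi_n^*S$, then since $R$ is uniform, $R \leq \pi_n^*R = \pi_n^*S$, which by \Cref{lem:relational-structure-properties}(2)+(3) is exactly the statement $\pi_n : R \to S$; symmetrically $\pi_n : S \to R$. (The idempotence of $\pi_n$ is the one external fact I rely on that is not spelled out in the excerpt, but it is immediate from the recursion $\pi_{j+1} = \Phi(\pi_j)$ together with $F$ being a $\CPO$-functor; I would state it as a small preliminary.)

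Next, the OFE axioms. Symmetry of $\stackrel{n}{=}$ is built into the second formulation (the condition is stated symmetrically), and reflexivity follows from $\pi_n : R \to R$ for uniform $R$, which is the definition of uniformity; transitivity follows from composing the relevant morphisms, or more cleanly from the first formulation $\pi_n^*R = \pi_n^*S$, which is plainly an equivalence relation. The chain is decreasing: if $\pi_{n+1}^*R = \pi_{n+1}^*S$ then $\pi_n^*R = \pi_n^*S$, because $\pi_n = \pi_n \circ \pi_{n+1}$ (another standard identity for these projections, again from the recursion), so $\pi_n^*R = \pi_{n+1}^*(\pi_n^*R) = \pi_{n+1}^*(\pi_n^*S) = \pi_n^*S$ — wait, more directly: $\pi_n^* = (\pi_{n+1}\circ\pi_n)^* = \pi_n^* \circ \pi_{n+1}^*$ using $\pi_{n+1}\circ\pi_n = \pi_n$, so applying $\pi_n^*$ to both sides of $\pi_{n+1}^*R = \pi_{n+1}^*S$ yields $\pi_n^*R = \pi_n^*S$. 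For $n=0$: $\pi_0 = \Phi^0(\bot) = \bot$, so $\pi_0^*R = \bot^*R$ is the same relation regardless of $R$ (in the motivating example, the top relation), hence $\stackrel{0}{=}$ is total. Finally, convergence to the identity: if $R \stackrel{n}{=} S$ for all $n$, then $\pi_n : R \to S$ for all $n$, so by admissibility $\lim_n \pi_n : R \to S$; by the minimal invariant property $\lim_n \pi_n = 1_D$, so $1_D : R \to S$, i.e.\ $R \leq S$; symmetrically $S \leq R$, hence $R = S$. This last argument is exactly the one already rehearsed in the paragraph preceding the lemma.

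The main work is completeness. Given a Cauchy sequence $(R^{(k)})_k$ in $\mathcal{U}$, I need to produce a limit. The natural candidate, guided by the intuition in the text (``look at the $n$th level... they eventually stabilize''), is to define $R \triangleq \bigcap_n \pi_n^* R^{(k_n)}$ where $k_n$ is large enough that $R^{(k)} \stackrel{n}{=} R^{(k_n)}$ for all $k \geq k_n$ — this is well-defined because $\pi_n^* R^{(k)}$ is the same for all such $k$. One checks this is independent of the chosen $k_n$ and that the family is coherent so the intersection stabilizes levelwise: $\pi_n^* R = \pi_n^* R^{(k_n)}$ for each $n$ (using idempotence and the decreasing-chain identity to see the $m \geq n$ terms of the intersection contribute nothing new at level $n$). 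Then I must verify $R \in \mathcal{U}$, i.e.\ $\pi_i : R \to R$ for each $i$: since $R = \bigcap_n \pi_n^* R^{(k_n)}$, by \Cref{lem:relational-structure-properties}(4) it suffices to show $\pi_i : R \to \pi_n^* R^{(k_n)}$ for each $n$, equivalently $\pi_i \circ \text{(incl)} $ composed appropriately lands in $R^{(k_n)}$ after applying $\pi_n$; this comes down to $R \leq \pi_n^* R^{(k_n)}$ (immediate) together with $\pi_n \circ \pi_i = \pi_{\min(i,n)}$-type identities and uniformity of $R^{(k_n)}$. Finally, $R$ is the COFE-limit: given $n$, for $k \geq k_n$ we have $\pi_n^* R^{(k)} = \pi_n^* R^{(k_n)} = \pi_n^* R$, which by the already-proved equivalence is exactly $R^{(k)} \stackrel{n}{=} R$. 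I expect the bookkeeping around the projection identities $\pi_m \circ \pi_n = \pi_{\min(m,n)}$ — and establishing them cleanly from the recursion — to be the fiddliest part; everything else is a routine unfolding of \Cref{lem:relational-structure-properties} and the definitions of uniformity and admissibility.
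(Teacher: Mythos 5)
Your proof is correct and takes essentially the same route as the paper's: the same equivalence argument via uniformity and \Cref{lem:relational-structure-properties}, the same appeal to admissibility and minimal invariance for convergence to the identity, and the same candidate limit $\bigcap_n \pi_n^* R^{(k_n)}$ for completeness, verified by commuting inverse images with intersections. The only difference is presentational: you isolate and justify the projection identities $\pi_m \circ \pi_n = \pi_{\min(m,n)}$ explicitly, which the paper uses silently.
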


\begin{proof}
  The equivalence between the two notions follows from the previous discussion.
  To show that this indeed defines a COFE, note that we have already seen that
  $R = S$ when $R \stackrel{n}{=} S$ for every $n$, so we have a well defined
  OFE.  Thus, we just need to prove completeness.  Let $(R_i)$ be a Cauchy
  sequence on $\mathcal{U}$.  For every $i \in \N$, there exists some
  $m_i \in \N$ such that $\pi_i^*(R_n)$ is equal to
  $S_i \triangleq \pi_i^*(R_{m_i})$ for any $n \geq m_i$.  Without loss of
  generality, we can assume that $(m_i)$ is increasing.  We pose
  \begin{align*}
    \lim R & \triangleq \bigcap_i S_i.
  \end{align*}
  We can show that uniform relations are closed under inverse images by $\pi_i$
  and intersections, hence each $S_i$ and $\lim R$ are indeed uniform.
  Moreover, because $S$ is a subsequence of $R$, it must be a Cauchy sequence,
  and it must have the same limit as $R$, if one of them does have a limit.

  To conclude, we just need to show that $\lim R$ is indeed the limit of $S$.
  First, note that $(S_i)$ is decreasing.  Indeed, given $i \leq j$, we have
  \begin{align*}
    S_i & = \pi_i^*(R_{m_i}) = \pi_i^*(R_{m_j}) = (\pi_j \circ \pi_i)^*(R_{m_j})
     = \pi_i^*(\pi_j^*(R_{m_j})) = \pi_i^*(S_j).
  \end{align*}
  Thus, $S_j \leq S_i = \pi_i^*(S_j)$ is equivalent to $\pi_i : S_j \to S_j$,
  which follows from the uniformity of $S_j$.

  On the other hand, given $i \in \N$, we have $\lim R \stackrel{i}{=} S_i$.
  Indeed,
  \begin{align*}
    \pi_i^*(\lim R) & = \pi_i^*\left(\bigcap_j S_j\right) \\
    & = \pi_i^*\left(\bigcap_{j \geq i} S_j\right)
    & \text{$S$ is decreasing} \\
    & = \bigcap_{j \geq i} \pi_i^*(S_j)
    & \text{intersections commute with inverse images} \\
    & = \bigcap_{j \geq i} \pi_i^*(S_i) \\
    & = \pi_i^*(S_i),
  \end{align*}
  which shows that $S$ does converge to $\lim R$.
\end{proof}

Now that we have a COFE, we just need a contractive operator on
$\mathcal{U}$.

\begin{lemma}
  \label{lem:contractive-action}
  The following defines a contractive operator on $\mathcal{U}$:
  \begin{align*}
    \Psi(R) & \triangleq (i^{-1})^*(F(R,R)).
  \end{align*}
\end{lemma}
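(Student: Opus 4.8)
The plan is to check the two things implicit in the statement: that $\Psi$ carries $\mathcal{U}$ into $\mathcal{U}$, and that it is contractive for the COFE structure of \Cref{lem:uniform-cofe}. Both follow from a single identity. Since $\pi_{j+1} = \Phi(\pi_j) = i \circ F(\pi_j,\pi_j) \circ i^{-1}$, precomposing with $i^{-1}$ gives $i^{-1} \circ \pi_{j+1} = F(\pi_j,\pi_j) \circ i^{-1}$. Combined with \Cref{lem:relational-structure-properties}, this yields a uniform proof pattern that I would reuse twice: to establish a claim of the form $\pi_{j+1} : (i^{-1})^*A \to (i^{-1})^*B$ in $\mathcal{R}$, rewrite it via part~(3) of that lemma as $i^{-1} \circ \pi_{j+1} : (i^{-1})^*A \to B$, substitute $i^{-1} \circ \pi_{j+1} = F(\pi_j,\pi_j) \circ i^{-1}$, and then factor this composite through $i^{-1} : (i^{-1})^*A \to A$ (part~(2) of the lemma) followed by a morphism $F(\pi_j,\pi_j) : A \to B$ extracted from the action of $F$ on $\mathcal{R}$.

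For well-definedness, I would fix a uniform $R$; then $\Psi(R) = (i^{-1})^*F(R,R)$ already lies in $\mathcal{R}_D$ because $F(R,R) \in \mathcal{R}_{F(D,D)}$ and $i^{-1} : D \to F(D,D)$, so it remains to show $\pi_j : \Psi(R) \to \Psi(R)$ for every $j$. For $j = 0$ this is $\bot : \Psi(R) \to \Psi(R)$, which holds by admissibility of $\mathcal{R}$. For $j+1$, the pattern above (with $A = B = F(R,R)$) reduces the goal to producing $F(\pi_j,\pi_j) : F(R,R) \to F(R,R)$; since $R$ is uniform we have $\pi_j : R \to R$, and feeding this into both the contravariant and the covariant argument of the action of $F$ gives exactly that morphism.

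For contractiveness, I would suppose $R \stackrel{n}{=} S$ in $\mathcal{U}$; by \Cref{lem:uniform-cofe} this unpacks to $\pi_n : R \to S$ and $\pi_n : S \to R$. The goal is $\Psi(R) \stackrel{n+1}{=} \Psi(S)$, that is, $\pi_{n+1} : \Psi(R) \to \Psi(S)$ and $\pi_{n+1} : \Psi(S) \to \Psi(R)$; by the evident symmetry only the first is needed. Here the pattern applies with $A = F(R,R)$ and $B = F(S,S)$, reducing the goal to $F(\pi_n,\pi_n) : F(R,R) \to F(S,S)$. The action of $F$ produces this morphism from $\pi_n : S \to R$ fed to the contravariant first argument together with $\pi_n : R \to S$ fed to the covariant second argument, and these are precisely the two halves of the hypothesis $R \stackrel{n}{=} S$.

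The argument is mostly bookkeeping, and the two places that call for care are both minor: in the well-definedness step the base case $j = 0$ has no unfolding of $\pi$ to exploit and must be handled by admissibility rather than by the action of $F$; and throughout, one must keep the variance of $F$ straight, so that in the contractive step it is the ``backwards'' arrow $\pi_n : S \to R$ that is consumed by the first argument of the action.
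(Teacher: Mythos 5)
Your proposal is correct and follows essentially the same route as the paper's proof: an auxiliary composition pattern built from $i^{-1}\pi_{j+1} = F(\pi_j,\pi_j)\,i^{-1}$ together with \Cref{lem:relational-structure-properties}, used once (with $A=B=F(R,R)$ and uniformity of $R$) to show $\Psi$ preserves $\mathcal{U}$ and once (with the two halves of $R \stackrel{n}{=} S$ feeding the contravariant and covariant arguments of $F$) for contractiveness. The only divergence is the base case $j=0$, where you invoke admissibility of $\Psi(R)$ directly on $\pi_0 = \bot$ while the paper first checks $i^{-1}\bot = \bot$; both are valid, and yours is slightly more direct.
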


\begin{proof}
  We begin with the following auxiliary result.  If $R \stackrel{n}{=} S$, for
  $R, S \in \mathcal{U}$, then
  \[ \pi_{n+1} = iF(\pi_n,\pi_n)i^{-1} : \Psi(R) \to \Psi(S). \] Indeed, by
  unfolding definitions, we have $i^{-1} : (i^{-1})^*F(R,R) \to F(R,R)$ and
  $i : F(S,S) \to (i^{-1})^*F(S,S)$.  By unfolding $\Psi$, and by composition,
  we can prove this statement by showing
  \[ F(\pi_m,\pi_m) : F(R,R) \to F(S,S). \] This follows from
  $R \stackrel{n}{=} S$ by \Cref{lem:uniform-cofe}.

  Let us proceed with the main proof. First, note that $\Psi(R)$ is indeed
  uniform, so $\Psi : \mathcal{U} \to \mathcal{U}$.  Indeed, we need to show
  that $\pi_n : \Psi(R) \to \Psi(R)$ for any $n$.  If $n \neq 0$, we apply the
  auxiliary result above.  If $n = 0$, it suffices to show that
  $i^{-1}\bot : \Psi(R) \to F(R,R)$.  But
  $i^{-1}\bot = i^{-1}\bot\bot \leq i^{-1}i\bot = \bot$, so $i^{-1}\bot = \bot$,
  and we conclude because $\mathcal{R} \ni F(R,R)$ is pointed.  Second, we need
  to show that $R \stackrel{n}{=} S$ implies
  $\Psi(R) \stackrel{n+1}{=} \Psi(S)$.  This follows by applying the auxiliary
  result in both directions, and by using \Cref{lem:uniform-cofe}.
\end{proof}

Combining all these ingredients, we obtain yet another strategy for building
$R_D$.

\begin{proof}[Proof of \Cref{thm:minimal-invariant-relations}; Banach fixed
  point]

  It suffices to apply \Cref{thm:banach} to the operator
  $\Psi : \mathcal{U} \to \mathcal{U}$ of \Cref{lem:contractive-action}. We just
  need to find an initial uniform relation to construct the fixed point.  Note
  that $\mathcal{R}_D$ has an element $\top$, defined as the intersection of the
  empty family of relations.  Moreover, for any $f : X \to D$ in $\mathcal{C}$
  and $R \in \mathcal{R}_X$, we have $f : R \to \top$ by
  \Cref{lem:relational-structure-properties}.  In particular,
  $\pi_i : \top \to \top$ for any $i$, so $\top \in \mathcal{U}$ and we
  conclude.
\end{proof}

\section{Kleene Fixed Point}
\label{sec:kleene}

As observed earlier, the inverse limit construction can be seen as a
generalization of Kleene's fixed point theorem:
\begin{theorem}[Kleene]
  \label{thm:kleene}
  Let $X$ be a pointed CPO and $f : X \to X$ be a continuous function.  Then $f$
  has a least fixed point $x = f(x)$, given by the limit of the chain
  $\bot \sqsubseteq f(\bot) \sqsubseteq f^2(\bot) \sqsubseteq \cdots$.
\end{theorem}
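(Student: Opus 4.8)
The plan is to follow the classical argument: build the least fixed point as the limit of the chain obtained by iterating $f$ on $\bot$. First I would check that the sequence $x_n \triangleq f^n(\bot)$ really is an increasing chain $x : \N \Mono X$, so that its limit exists by completeness. The base case $x_0 = \bot \sqsubseteq x_1$ holds because $\bot$ is least, and the inductive step follows from monotonicity of $f$ (which is part of being continuous): from $x_n \sqsubseteq x_{n+1}$ we get $x_{n+1} = f(x_n) \sqsubseteq f(x_{n+1}) = x_{n+2}$. I then set $x \triangleq \lim_n x_n$.

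Next I would verify that $x$ is a fixed point. By continuity of $f$, $f(x) = f(\lim_n x_n) = \lim_n f(x_n) = \lim_n x_{n+1}$. Since $(x_{n+1})_n$ is a cofinal subsequence of $(x_n)_n$, the two chains have the same upper bounds and hence the same limit, so $f(x) = \lim_n x_n = x$.

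Finally I would establish minimality. Suppose $y \in X$ satisfies $f(y) = y$. An easy induction shows $x_n \sqsubseteq y$ for every $n$: we have $x_0 = \bot \sqsubseteq y$, and if $x_n \sqsubseteq y$ then $x_{n+1} = f(x_n) \sqsubseteq f(y) = y$ by monotonicity. Thus $y$ is an upper bound of the chain $(x_n)_n$, and therefore $x = \lim_n x_n \sqsubseteq y$, as required.

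None of these steps is a real obstacle — this is the standard textbook proof. The only point needing a sliver of care is the claim that a cofinal subsequence of a chain has the same limit, which I would justify by observing that each term of either chain is dominated by some term of the other, so the two chains share the same set of upper bounds.
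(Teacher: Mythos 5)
Your proof is correct and is the standard argument; the paper itself states \Cref{thm:kleene} as a classical result without giving a proof, so there is nothing to diverge from. All three steps (the chain is increasing by induction on monotonicity, continuity transfers the limit through $f$ with the cofinal-subsequence observation, and minimality by induction against any fixed point $y$) are sound as written.
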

As a minor variation on \Cref{sec:inverse-limit}, let us sketch how we can
restate those results using Kleene's theorem, by viewing domains and relations
as an ordered structure rather than a category.  There are two issues that we
need to address.  First, $\mathcal{R}$ contains potentially many morphisms
between a pair of objects, whereas a CPO $X$ seen as a category has at most one.
Second, $\mathcal{R}$ is not a \emph{skeletal category}: there are objects that
are isomorphic, but not equal.  By contrast, a CPO seen as a category is
skeletal because its order is antisymmetric.

To solve the first issue, consider the slice category $\mathcal{C}^e/D$.
Objects of $\mathcal{C}^e/D$ are embeddings of type $X \to D$, and arrows from
$X \to D$ to $Y \to D$ commuting triangles of embeddings:
\begin{center}
  \begin{tikzcd}
    X \ar{dr} \ar{rr} & & Y \ar{dl} \\
    & D. &
  \end{tikzcd}
\end{center}
Since embeddings are monomorphisms, if there are two arrows of type $X \to Y$ in
$\mathcal{C}^e/D$, they must be equal.  We can apply a similar idea to
$\mathcal{R}$ by considering $\mathcal{R}(\mathcal{C}^e/D)$, which is defined as
the following pullback:
\begin{center}
  \begin{tikzcd}
    \mathcal{R}(\mathcal{C}^e/D) \ar{r} \ar{d}
    \ar[phantom, very near start]{rd}{\lrcorner} &
    \mathcal{R}^e \ar{d} \\
    \mathcal{C}^e/D \ar{r} & \mathcal{C}^e
  \end{tikzcd}
\end{center}
Explicitly, objects of $\mathcal{R}(\mathcal{C}^e/D)$ are triples
$X = (|X|, e_X : |X| \to D,R_X : \mathcal{R}_X)$, where $e_X$ is an embedding.
An arrow $f : X \to Y$ is an embedding $f^e : |X| \to |Y|$ such that
$e_Yf^e = e_X$ and such that, in $\mathcal{R}$, we have $f^e : R_X \to R_Y$ and
$f^p : R_Y \to R_X$.  Once again, there is at most one arrow of any given type
in this category.

To solve the second issue, note that, in many cases of interest, we can replace
$\mathcal{C}^e/D$ (and $\mathcal{R}(\mathcal{C}^e/D)$) with equivalent skeletal
subcategories, by choosing canonical representatives for their objects.  For
instance, if $\mathcal{C}$ is $\CPO_\bot$, we can replace an embedding
$e_X : X \to D$ with its image in $D$, which is isomorphic to $X$. Two objects
in $\mathcal{C}^e/D$ are isomorphic if and only if their images in $D$ are
equal.  In what follows, I'll assume that such canonical representatives exist,
and that $D \to D$ is its own representative.  By abuse of notation, I'll
identify the above categories with their skeletal equivalents.

Both $\mathcal{C}^e/D$ and $\mathcal{R}(\mathcal{C}^e/D)$ are CPOs: to compute
the least upper bound of a chain, we simply project the chain onto $\mathcal{C}$
(or $\mathcal{R}$), compute its bilimit, and use its canonical representative in
$\mathcal{C}^e/D$ (or $\mathcal{R}(\mathcal{C}^e/D)$).  Moreover, these CPOs are
pointed: their least elements are $(1,\bot : 1 \to D)$ and
$(1,\bot : 1 \to D, \top)$. This leads to the following alternative proof.

\begin{proof}[Proof of \Cref{thm:minimal-invariant-relations}; Kleene fixed
  point]

  Let $F : \mathcal{C}^{op} \times \mathcal{C} \to \mathcal{C}$ be a
  $\CPO$-functor.  As we have seen in \Cref{sec:inverse-limit}, we can view the
  admissible action of $F$ on $\mathcal{R}$ as a lifting
  $F_{\mathcal{R}} : \mathcal{R}^{op} \times \mathcal{R} \to \mathcal{R}$.
  These functors give rise to functors $F^e : \mathcal{C}^e \to \mathcal{C}^e$
  and $F^e_{\mathcal{R}} : \mathcal{R}^e \to \mathcal{R}^e$ that preserve
  colimits of chains.  We have the following commutative diagram:
  \begin{center}
    \begin{tikzcd}
      \mathcal{R}^e \ar{r}{F^e_{\mathcal{R}}} \ar{d} & \mathcal{R}^e \ar{d} \\
      \mathcal{C}^e \ar{r}{F^e} & \mathcal{C}^e.
    \end{tikzcd}
  \end{center}

  By working with canonical representatives, we can view these functors as
  continuous functions $f : \mathcal{C}^e/D \to \mathcal{C}^e/D$ and
  $f_{\mathcal{R}} : \mathcal{R}(\mathcal{C}^e/D) \to
  \mathcal{R}(\mathcal{C}^e/D)$, and we can take their fixed points by
  \Cref{thm:kleene}.  By construction, the fixed point of $f$ is just $D$, and
  the above diagram implies that
  $p(\mathsf{fix}(f_{\mathcal{R}})) = \mathsf{fix}(f) = D$, where
  $p : \mathcal{R}(\mathcal{C}^e/D) \to \mathcal{C}^e/D$ is the canonical
  projection.  This means, after some unfolding, that the relation component of
  $\mathsf{fix}(f_{\mathcal{R}})$ is a relation on $D$ that satisfies the
  recursive equation we are seeking.
\end{proof}

\begin{remark}
  Most categories used in domain theory have canonical representatives of
  embeddings---we can take the image of an embedding, as we have done above, or
  we can choose representatives using the axiom of choice.  But if images are
  not available, there is another option that does not rely on the axiom of
  choice: to work with $\bar{\mathcal{C}}$, the \emph{Karoubi envelope} of
  $\mathcal{C}$.  This category extends $\mathcal{C}$ by freely splitting all
  idempotent arrows in $\mathcal{C}$ (that is, arrows $p : X \to X$ such that
  $pp = p$).  Roughly, this means that $\bar{\mathcal{C}}$ contains canonical
  image objects of all idempotents in $\mathcal{C}$.  In particular, we can
  compute the image of the idempotent $f^ef^p$ determined by an embedding
  $f^e : Y \to X$, which yields a choice of representatives for embeddings.
  Moreover, $\bar{\mathcal{C}}$ (and $\bar{\mathcal{R}}$) inherit the properties
  of the original categories that we relied on to carry the above constructions,
  so our results still apply.
\end{remark}

\section{Conclusion}
\label{sec:conclusion}

We have just reviewed Pitts' framework of relational properties of
domains~\cite{Pitts96} and seen how it relates to other important fixed-point
theorems: the \emph{inverse limit construction}~\cite{SmythP82}, \emph{Banach's
  fixed-point theorem}, and \emph{Kleene's fixed point theorem}.  These
connections are implicit in some of the existing literature, and probably
already known by experts.  For example, the work of \textcite{HermidaJ98}
presents a different method for constructing relations on recursive data types
that requires lifting limits and colimits along a fibration; likewise, the proof
of Pitts' method with the inverse limit construction uses
\Cref{lem:bilimits-relational-structures}, which lifts bilimits to a relational
structure.  As for Banach's fixed-point theorem, several works for reasoning
about denotational models~\cite{DBLP:conf/tldi/BirkedalST09, BirkedalST09,
  BirkedalRSSTY11} employ similar constructions while sometimes noting that
Pitts' framework could have been used
instead~\cite{DBLP:conf/tldi/BirkedalST09}.  Here, we have seen how this
connection goes beyond the construction of a particular set of logical
relations, and lies at the heart of Pitts' method.  It is worth noting that the
connections between these fixed-point theorems go beyond the setting of
relational reasoning---e.g. \textcite{Thamsborg10} discusses how we can view
Banach's fixed-point theorem as an instance of Kleene's.

Traditionally, \emph{step-indexing} uses the steps in some operational semantics
to define recursive relations~\cite{AppelM01}.  In light of the connections
explained above, Pitts' construction---as well as other applications of Banach's
fixed-point theorem for denotational models~\cite{BirkedalRSSTY11, BirkedalST09,
  DBLP:conf/tldi/BirkedalST09, MacQueenPS86}---use a similar trick to ensure
that the recursion is well-founded, but count the \emph{number of unfoldings of
  a recursive type} instead.  In this sense, guarded recursion is more general,
since the notion of counting can be tied to anything that can be tracked in the
execution of a program, not just the number of unfoldings of the domain
equation.  On the other hand, relations constructed with Pitts' method are often
cleaner then their guarded counterparts, because they do not have to mention
step indices or guards explicitly.

One question that I have not explored is how this connects to variants of Pitts'
construction used for operational semantics, as developed by
\textcite{BirkedalH99} or \textcite{CraryH07}.  Such works note that the
projections $\pi_i$ can often be defined as regular programs in a language, and
leverage this fact to adapt Pitts' ideas to establish powerful reasoning
principles for program equivalence. Like Pitts' original construction, these
works employ the Knaster-Tarski fixed-point theorem, but I believe that it might
be possible to adapt their constructions to leverage other results as well. One
possible connection lies in the proof of metric preservation for the Fuzz
language~\cite{ReedP10}.  Its argument employed step-indexed logical relations,
but the indices of the relations tracked the number of recursive unfoldings
reduced during execution rather than the number of transitions in a small-step
semantics.  This idea is similar to constructions by guarded recursion performed
in denotational settings~\cite{BirkedalST09, DBLP:conf/tldi/BirkedalST09,
  BirkedalRSSTY11, MacQueenPS86}, suggesting that it might be possible to obtain
an alternative, operational proof of metric preservation for Fuzz along the
lines of \textcite{BirkedalH99,CraryH07}.

\section*{Acknowledgments}

I would like to thank Lars Birkedal for useful discussions on this topic.

\printbibliography

\ifappendix

\appendix

\section{The Karoubi envelope}

In \Cref{sec:kleene}, we have seen how we can construct solutions to recursive
equations of relations by using Kleene's fixed point theorem, provided that the
$\CPO$-category $\mathcal{C}$ allows us to choose canonical representatives of
embeddings.  In this section, we'll see how we can guarantee this hypothesis by
working with the Karoubi envelope of $\mathcal{C}$.

\begin{definition}
  \label{def:karoubi-envelope}
  Let $\mathcal{C}$ be a category.  The \emph{Karoubi envelope} of $\mathcal{C}$
  is the category $\bar{\mathcal{C}}$ defined as follows.  The objects of
  $\bar{\mathcal{C}}$ are pairs $(X, p)$, where $X : \mathcal{C}$ and
  $p : X \to X$ is an idempotent arrow (that is, $pp = p$).  A morphism from
  $(X,p)$ to $(Y,q)$ is an arrow $f : X \to Y$ such that the following square
  commutes:
  \begin{center}
    \begin{tikzcd}
      X \ar{d}{p} \ar{r}{f} \ar{dr}{f} & Y \ar{d}{q} \\
      X \ar{r}{f} & Y.
    \end{tikzcd}
  \end{center}
  Composition is inherited from $\mathcal{C}$, and identities are squares of the
  form
  \begin{center}
    \begin{tikzcd}
      X \ar{d}{p} \ar{r}{p} \ar{dr}{p} & X \ar{d}{p} \\
      X \ar{r}{p} & X.
    \end{tikzcd}
  \end{center}
\end{definition}

\begin{definition}
  \label{def:karoubi-embedding}
  There is a fully faithful embedding $E : \mathcal{C} \to \bar{\mathcal{C}}$
  that maps an object $X : \mathcal{C}$ to $1 : X \to X$ and an arrow $f : X \to
  Y$ to
  \begin{center}
    \begin{tikzcd}
      X \ar{r}{f} \ar{d}{1} \ar{dr}{f} & Y \ar{d}{1} \\
      X \ar{r}{f} & Y.
    \end{tikzcd}
  \end{center}
\end{definition}

\begin{definition}
  \label{def:splitting}
  A \emph{splitting} of an idempotent $p : X \to X$ is a factoring
  $p = X \xrightarrow{r_p} \im(p) \xrightarrow{s_p} X$ such that $r_ps_p = 1$.
  We say that a category has \emph{split idempotents} in if every idempotent is
  equipped with a splitting, and if $\im(1_X) = X$, $r_p = 1_X$ and $s_p = 1_X$.
\end{definition}

\begin{lemma}
  \label{lem:splittings-unique}
  Every two splittings of an idempotent $p : X \to X$ must be isomorphic: if
  $X \xrightarrow{r} Y \xrightarrow{s} X$ and $X \xrightarrow{r'} Y'
  \xrightarrow{s'} X$ are two splittings, there is a unique isomorphism $i : Y
  \to Y'$ such that the following diagram commutes:
  \begin{center}
    \begin{tikzcd}
      X \ar{r}{r'} \ar{d}{r} & Y' \ar{d}{s'} \\
      Y \ar{r}{s} \ar{ur}{i} & X.
    \end{tikzcd}
  \end{center}
\end{lemma}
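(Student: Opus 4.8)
The plan is to build the isomorphism explicitly out of the two splitting data and then verify the required identities by routine diagram chasing with the defining equations $s \circ r = p = s' \circ r'$, $r \circ s = 1_Y$ and $r' \circ s' = 1_{Y'}$. Note that the hypothesis that $\mathcal{C}$ has split idempotents in the sense of \Cref{def:splitting} is not needed here; we only use the two given splittings of the single idempotent $p$.

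First I would define $i \triangleq r' \circ s : Y \to Y'$ and, as a candidate inverse, $j \triangleq r \circ s' : Y' \to Y$. Then I would check that they are mutually inverse: $j \circ i = r \circ (s' \circ r') \circ s = r \circ p \circ s = r \circ (s \circ r) \circ s = (r \circ s) \circ (r \circ s) = 1_Y$, and symmetrically $i \circ j = r' \circ (s \circ r) \circ s' = r' \circ (s' \circ r') \circ s' = 1_{Y'}$. Hence $i$ is an isomorphism.

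Next I would verify that $i$ makes the displayed diagram commute. For the upper triangle, $i \circ r = r' \circ s \circ r = r' \circ p = r' \circ s' \circ r' = r'$, using $r' \circ s' = 1_{Y'}$. For the lower triangle, $s' \circ i = s' \circ r' \circ s = p \circ s = s \circ r \circ s = s$, using $r \circ s = 1_Y$.

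Finally, for uniqueness, suppose $i'$ is any morphism $Y \to Y'$ with $i' \circ r = r'$. Then $i' = i' \circ (r \circ s) = (i' \circ r) \circ s = r' \circ s = i$, so $i$ is the unique such morphism. I do not anticipate any real obstacle: the entire argument is associativity bookkeeping with the four splitting equations, and the only care needed is to keep track of composition order and to match the labels in the diagram.
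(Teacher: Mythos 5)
Your proposal is correct and matches the paper's proof essentially verbatim: the same isomorphism $i = r's$ with inverse $rs'$, verified by the same computations through $p$. The only (harmless) differences are that you explicitly check both triangles of the diagram and derive uniqueness from the upper triangle ($i'r = r'$) rather than the lower one ($s'i' = s$) as the paper does.
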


\begin{proof}
  Just take $i = r's$ and $i^{-1} = rs'$.  Then
  $ii^{-1} = r'srs' = r'ps' = r's'r's' = 1$, and similarly $i^{-1}i = 1$.
  Moreover, if $i'$ is an isomorphism such that $s'i' = s$, then
  $i' = r's'i' = r's$, which guarantees uniqueness.
\end{proof}

\begin{lemma}
  \label{lem:karoubi-free-splitting}
  The Karoubi envelope $\bar{\mathcal{C}}$ is the free category with split
  idempotents over $\mathcal{C}$ in the following sense.  It has split
  idempotents, and for every category with split idempotents $\mathcal{D}$ and
  for every functor $F : \mathcal{C} \to \mathcal{D}$, there exists a unique
  $\hat{F} : \bar{\mathcal{C}} \to \mathcal{D}$ such that $\hat{F}$ preserves
  canonical splittings and such that the following diagram commutes:
  \begin{center}
    \begin{tikzcd}
      \bar{\mathcal{C}} \ar{r}{\hat{F}} & \mathcal{D} \\
      \mathcal{C}. \ar{u}{E} \ar{ur}{F} &
    \end{tikzcd}
  \end{center}
\end{lemma}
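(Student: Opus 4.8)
The plan is to prove the two halves of the statement in turn. For the first, I would note that by \Cref{def:karoubi-envelope} an idempotent on an object $(X,p)$ of $\bar{\mathcal{C}}$ is exactly an arrow $e : X \to X$ of $\mathcal{C}$ satisfying $e e = e$ and $p e = e p = e$. Such an $e$ splits through the object $(X,e)$, with both structure arrows equal to $e$, regarded as $r_e : (X,p) \to (X,e)$ and $s_e : (X,e) \to (X,p)$. One checks that these are legal morphisms of $\bar{\mathcal{C}}$ (using $p e = e p = e$), that $s_e \circ r_e = e$ recovers the given idempotent while $r_e \circ s_e = e$ is the identity of $(X,e)$, and that the normalization clauses of \Cref{def:splitting} hold because the identity of $(X,p)$ is the arrow $p$, whose associated image object is $(X,p)$ again. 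This part is routine.

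For the universal property, fix a category $\mathcal{D}$ with split idempotents and a functor $F : \mathcal{C} \to \mathcal{D}$. I would define $\hat{F}$ on objects by $\hat{F}(X,p) \triangleq \im(F(p))$, using $\mathcal{D}$'s chosen splitting $F(X) \xrightarrow{r_{F(p)}} \im(F(p)) \xrightarrow{s_{F(p)}} F(X)$, and on an arrow $f : (X,p) \to (Y,q)$ by $\hat{F}(f) \triangleq r_{F(q)} \circ F(f) \circ s_{F(p)}$. Functoriality then reduces, via $s_{F(q)} r_{F(q)} = F(q)$ and $r_{F(p)} s_{F(p)} = 1$, to the identities $F(g)\,F(q)\,F(f) = F(g q f) = F(g f)$ and $F(p) F(p) = F(p)$, which hold because $g q = g$ and $p p = p$ for arrows and idempotents of $\bar{\mathcal{C}}$; and $\hat{F} \circ E = F$ is immediate from the normalization clauses applied to $F(1_X) = 1_{F(X)}$.

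Uniqueness comes from the observation that every object $(X,p)$ of $\bar{\mathcal{C}}$ is itself the canonical image of the idempotent $E(p)$ on $E(X)$, with structure arrows $E(p)$, and likewise every arrow $f : (X,p) \to (Y,q)$ is $r_q \circ E(f_0) \circ s_p$, where $f_0$ is its underlying $\mathcal{C}$-arrow and $r_q, s_p$ come from the splittings of $E(q), E(p)$. Hence any functor over $F$ that preserves canonical splittings is forced to agree with the $\hat{F}$ above on objects, on the arrows $r_p, s_p$, and therefore on all arrows.

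The substantive point, which I expect to be the main obstacle, is checking that $\hat{F}$ preserves canonical splittings of arbitrary idempotents. Given an idempotent $e$ on $(X,p)$ in $\bar{\mathcal{C}}$, a computation shows that $\hat{F}$ sends its splitting $\bigl((X,e), e, e\bigr)$ to the triple $\bigl(\im(F(e)),\ r_{F(e)} s_{F(p)},\ r_{F(p)} s_{F(e)}\bigr)$, and that this is indeed a splitting of the idempotent $\hat{F}(e) = r_{F(p)} F(e) s_{F(p)}$ on $\im(F(p))$; the verification uses $F(p) F(e) = F(e) = F(e) F(p)$ (from $p e = e p = e$) together with the one-sided inverses $r s = 1$. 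By \Cref{lem:splittings-unique} it then agrees with $\mathcal{D}$'s chosen splitting of $\hat{F}(e)$ up to the canonical isomorphism, and on the nose whenever those chosen splittings are coherent — in particular when $\im$ is taken to be a literal image, the case relevant to the paper, since there $F(e) F(p) = F(e)$ yields $\im(F(e)) = \im(\hat{F}(e))$. So the real content is reconciling the forced definition of $\hat{F}$ with $\mathcal{D}$'s prior choices of splittings, rather than the existence of a splitting, which is automatic.
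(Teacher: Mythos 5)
Your proposal takes essentially the same route as the paper: the same splitting of an idempotent $e$ on $(X,p)$ through the object $(X,e)$ with both structure arrows equal to $e$, and the same formulas $\hat{F}(X,p) \triangleq \im(F(p))$ and $\hat{F}(f) \triangleq r_{F(q)} \circ F(f) \circ s_{F(p)}$. The paper's proof in fact stops right after stating these definitions, so the verifications you supply --- functoriality, $\hat{F} \circ E = F$, and the uniqueness argument via the observation that $(X,p)$ is the canonical image of $E(p)$ --- are exactly the details it leaves implicit, and your remark that strict (rather than up-to-isomorphism) preservation of canonical splittings for idempotents over a general $(X,p)$ depends on the coherence of $\mathcal{D}$'s chosen splittings is a genuine subtlety the paper does not address.
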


\begin{proof}
  Suppose that we have an idempotent $f$ in $\bar{\mathcal{C}}$:
  \begin{center}
    \begin{tikzcd}
      X \ar{d}{p} \ar{r}{f} \ar{dr}{f} & X \ar{d}{p} \\
      X \ar{r}{f} & X.
    \end{tikzcd}
  \end{center}
  In particular, $f$ is also an idempotent of type $X \to X$ in
  $\mathcal{C}$. We equip $f$ with the following splitting:
  \begin{center}
    \begin{tikzcd}
      X \ar{d}{p} \ar{r}{f} \ar{dr}{f} & X \ar{d}{f} \ar{r}{f} \ar{dr}{f} & X \ar{d}{p} \\
      X \ar{r}{f} & X \ar{r}{f} & X.
    \end{tikzcd}
  \end{center}
  Given an object $p$ in $\bar{\mathcal{C}}$, we define
  $\hat{F}p \triangleq \im(Fp)$.  If $f : (X, p) \to (Y, q)$ is an arrow, we
  define
  \[ \hat{F}f \triangleq \im(Fp) \xrightarrow{s_{Fp}} FX \xrightarrow{Ff} FY
    \xrightarrow{r_{Fq}} \im(Fq). \]
\end{proof}

$\bar{\mathcal{C}}$ inherits from $\mathcal{C}$ much of the structure that is
useful in domain theory, which means that replacing a category with its Karoubi
envelope is usually harmless.  For example, any construction on $\mathcal{C}$
that is functorial, such as products or exponentials, can be lifted to
$\bar{\mathcal{C}}$---this follows from \Cref{lem:karoubi-free-splitting}, and
also because categories with split idempotents are stable under taking products
and diagrams.  Moreover, all the properties of $\CPO$-categories that we have
used to build recursive relations carry over to the Karoubi envelope, as we'll
see now.  From now on, we we'll fix some $\CPO$-category $\mathcal{C}$.

\begin{lemma}
  \label{lem:karoubi-cpo}
  The Karoubi envelope $\bar{\mathcal{C}}$ is a $\CPO$-category, with the $\CPO$
  structure on arrows inherited from that of $\mathcal{C}$.  If
  $F : \mathcal{C} \to \mathcal{D}$ is a $\CPO$-functor and $\mathcal{D}$ has
  split idempotents, then $\hat{F} : \bar{\mathcal{C}} \to \mathcal{D}$ is also
  a $\CPO$-functor, where $\hat{F}$ is defined as in
  \Cref{lem:karoubi-free-splitting}.
\end{lemma}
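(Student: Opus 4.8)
The plan is to verify the two assertions separately, building directly on \Cref{lem:karoubi-free-splitting} for the second one. First I would establish that $\bar{\mathcal{C}}$ is a $\CPO$-category. Recall that a morphism $(X,p) \to (Y,q)$ in $\bar{\mathcal{C}}$ is an arrow $f : X \to Y$ in $\mathcal{C}$ with $qf = f = fp$; this is a subset of the hom-CPO $\mathcal{C}(X,Y)$, so I give it the inherited order. The key point is that this subset is closed under limits of chains: if $(f_n)$ is an increasing chain of morphisms $(X,p) \to (Y,q)$, then $\lim_n f_n$ exists in $\mathcal{C}(X,Y)$ since $\mathcal{C}$ is a $\CPO$-category, and because composition in $\mathcal{C}$ is continuous we get $q \circ (\lim_n f_n) = \lim_n (q f_n) = \lim_n f_n$ and similarly $(\lim_n f_n) \circ p = \lim_n f_n$. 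Hence $\lim_n f_n$ is again a morphism in $\bar{\mathcal{C}}$, and it is clearly the limit there. Continuity of composition in $\bar{\mathcal{C}}$ is then immediate, since composition is computed exactly as in $\mathcal{C}$ and the inclusion of hom-sets is order-reflecting and limit-preserving.

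Next I would turn to the functor $\hat{F}$. The hard part here is that $\hat{F}$ is \emph{not} defined simply by restricting $F$: on an object $(X,p)$ it takes the image $\im(Fp)$, and on a morphism $f : (X,p) \to (Y,q)$ it is the composite $r_{Fq} \circ Ff \circ s_{Fp}$ using the chosen splittings in $\mathcal{D}$. So I cannot just invoke the previous paragraph; I must check continuity of this composite as a function of $f$. But this is now routine: the assignment $f \mapsto r_{Fq} \circ Ff \circ s_{Fp}$ is obtained by post- and pre-composing with fixed morphisms $r_{Fq}$ and $s_{Fp}$ in $\mathcal{D}$, and applying the action of $F$ on morphisms, which is continuous by hypothesis. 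Composition in $\mathcal{D}$ is continuous because $\mathcal{D}$ is a $\CPO$-category. A composition of continuous maps between hom-CPOs is continuous, so $\hat{F}$ acts continuously on each hom-CPO, i.e., $\hat{F}$ is a $\CPO$-functor.

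The main obstacle, such as it is, is bookkeeping: making sure that the hom-CPO structure chosen on $\bar{\mathcal{C}}$ really is inherited (not merely a sub-poset that happens to be a CPO in its own, incompatible way), and that the limits computed in the sub-CPO of admissible morphisms agree with those in the ambient $\mathcal{C}(X,Y)$. Both follow from the fact that the defining conditions $qf = f$ and $fp = f$ are preserved under limits of chains, which is exactly what continuity of composition in $\mathcal{C}$ gives. No new ideas beyond \Cref{lem:karoubi-free-splitting} and the definition of $\CPO$-category are needed; the proof is a short diagram-and-continuity chase.
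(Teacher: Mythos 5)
Your argument is correct and is exactly the routine verification one would expect: the hom-sets of $\bar{\mathcal{C}}$ are sub-CPOs of those of $\mathcal{C}$ because the defining equations $qf = f = fp$ are preserved under limits of chains by continuity of composition, and $\hat{F}$ is continuous on hom-CPOs because $f \mapsto r_{Fq} \circ Ff \circ s_{Fp}$ is a composite of the continuous action of $F$ with pre- and post-composition by fixed arrows. The paper states \Cref{lem:karoubi-cpo} without proof, so there is nothing to compare against; your write-up supplies the omitted details and identifies the right (and only) subtle point, namely that the limits in $\bar{\mathcal{C}}(( X,p),(Y,q))$ are computed in the ambient $\mathcal{C}(X,Y)$.
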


\begin{lemma}
  Let $(D,p) \in \bar{\mathcal{C}}$.  If $D$ is pointed, then $(D,p)$ is
  pointed.
\end{lemma}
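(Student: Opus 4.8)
The plan is to exhibit the bottom morphism into $(D,p)$ explicitly and then verify the two clauses in the definition of a pointed object. Recall that $(D,p)$ being pointed means that for every $(X,q) \in \bar{\mathcal{C}}$ the hom-CPO $\bar{\mathcal{C}}((X,q),(D,p))$ has a least element, and that precomposition with any morphism carries this least element to the least element. By \Cref{lem:karoubi-cpo} the order on $\bar{\mathcal{C}}((X,q),(D,p))$ is inherited from $\mathcal{C}(X,D)$, so it suffices to work inside $\mathcal{C}(X,D)$, where a least element $\bot \triangleq \bot_{X,D}$ exists because $D$ is pointed in $\mathcal{C}$.

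The key observation is that $\bot_{X,D}$ itself need not be a morphism $(X,q) \to (D,p)$ in $\bar{\mathcal{C}}$: the defining equations $p \circ f = f = f \circ q$ may fail, since pointedness of $D$ only provides the left-absorption law $\bot \circ g = \bot$ and says nothing about $p \circ \bot$. The fix is to post-compose with $p$: I claim $\bar{\bot} \triangleq p \circ \bot_{X,D}$ is the least element of $\bar{\mathcal{C}}((X,q),(D,p))$. It is a morphism because $p \circ (p \circ \bot) = (p \circ p) \circ \bot = p \circ \bot$ by idempotence of $p$, and $(p \circ \bot) \circ q = p \circ (\bot \circ q) = p \circ \bot$ using $\bot \circ q = \bot$. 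It is least because, for any morphism $f : (X,q) \to (D,p)$, we have $\bot_{X,D} \sqsubseteq f$ in $\mathcal{C}(X,D)$, and applying the monotone map $p \circ (-)$ (composition is continuous, hence monotone, in a $\CPO$-category) yields $\bar{\bot} = p \circ \bot_{X,D} \sqsubseteq p \circ f = f$, the last equality because $f$ is a morphism of $\bar{\mathcal{C}}$.

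It remains to check left-absorption in $\bar{\mathcal{C}}$: for any $g : (X,q) \to (Y,q')$ we must have $\bar{\bot}_{(Y,q')} \circ g = \bar{\bot}_{(X,q)}$, both viewed as morphisms into $(D,p)$. This is a one-line calculation: $(p \circ \bot_{Y,D}) \circ g = p \circ (\bot_{Y,D} \circ g) = p \circ \bot_{X,D}$, where the last step uses the left-absorption law for $\bot$ in $\mathcal{C}$. Together with the previous paragraph, this shows $(D,p)$ is pointed.

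There is no serious obstacle here; the only point requiring a moment's thought is realizing that the naive candidate $\bot_{X,D}$ need not lie in the Karoubi hom-set and must be corrected to $p \circ \bot_{X,D}$, after which every verification reduces to idempotence of $p$, the left-absorption law in $\mathcal{C}$, and monotonicity of composition.
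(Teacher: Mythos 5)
Your proof is correct and follows essentially the same route as the paper: both take $p \circ \bot$ as the candidate least element, verify it is a morphism of $\bar{\mathcal{C}}$ using idempotence of $p$ and the absorption law $\bot \circ q = \bot$, establish leastness via $p\bot \sqsubseteq pf = f$, and check the absorption law $p\bot \circ g = p\bot$ at the end. Your write-up is somewhat more explicit about why the naive candidate $\bot_{X,D}$ fails, but the argument is the same.
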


\begin{proof}
  Given $(Y,q) \in \bar{\mathcal{C}}$, pose $b = p\bot : Y \to D$. We can show
  that $b$ is a morphism of type $(Y,q) \to (D,p)$.  We have a commutative
  diagram
  \begin{center}
    \begin{tikzcd}
      Y \ar{d}{q} \ar{dr}{b} \ar{r}{b} & D \ar{d}{p} \\
      Y \ar{r}{b} & D.
    \end{tikzcd}
  \end{center}
  The lower triangle commutes because $D$ is pointed, whereas the upper triangle
  commutes because $p$ is idempotent.  Moreover, note that $b$ is the least
  morphism of this type.  Indeed, given some other morphism $f$ of the same
  type, we have $b = p\bot \sqsubseteq pf = f$.  We conclude by noting that
  $p \bot g = p \bot$ for any other $g : (X,r) \to (Y,q)$, because $D$ is
  assumed to be pointed.
\end{proof}

\begin{lemma}
  \label{lem:karoubi-relational-structure}
  Let $p : \mathcal{R} \to \mathcal{C}$ be a relational structure.  Then
  $\bar{p} : \bar{\mathcal{R}} \to \bar{\mathcal{C}}$ is a relational structure.
\end{lemma}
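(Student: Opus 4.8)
The plan is to exhibit $\bar{\mathcal{R}}$, together with the projection $\bar{p}$, as the category attached to a relational structure functor $\bar{\mathcal{C}}^{op} \to \CLat$ read off from $\mathcal{R}$ by tracking what Karoubi-splitting does fibrewise. For an idempotent $p : X \to X$ in $\mathcal{C}$, functoriality of $\mathcal{R}$ makes $p^* : \mathcal{R}_X \to \mathcal{R}_X$ itself idempotent ($p^*p^* = (pp)^* = p^*$) and it preserves all greatest lower bounds, including the empty one, so $p^*\top = \top$. I would define the fibre of the new structure over $(X,p)$ to be $\mathrm{Fix}(p^*)$, the set of $R \in \mathcal{R}_X$ with $p^*R = R$.

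First I would check that $\mathrm{Fix}(p^*)$ is a complete lattice: since $p^*$ preserves meets, if $p^*R_i = R_i$ for all $i$ then $p^*\bigl(\bigwedge_i R_i\bigr) = \bigwedge_i p^*R_i = \bigwedge_i R_i$, so $\mathrm{Fix}(p^*)$ is closed under arbitrary meets taken in $\mathcal{R}_X$ (and contains $\top$, the empty meet); a subset of a complete lattice closed under all meets is again a complete lattice, with those same meets and with joins recovered as $p^*$ of the ambient join. Then, for a morphism $f : (X,p) \to (Y,q)$ of $\bar{\mathcal{C}}$ — that is, $f : X \to Y$ in $\mathcal{C}$ with $qf = f = fp$ — I would set $\bar{\mathcal{R}}(f)(S) \triangleq p^*(f^*S)$ for $S \in \mathrm{Fix}(q^*)$. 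This lands in $\mathrm{Fix}(p^*)$ because $p^*$ is idempotent, and it preserves meets because $p^*$ and $f^*$ do and meets in the fibres are the ambient ones, so $\bar{\mathcal{R}}(f)$ is a morphism of $\CLat$.

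Next I would verify functoriality. The identity of $(X,p)$ in $\bar{\mathcal{C}}$ is $p$ itself, and $\bar{\mathcal{R}}(p)(R) = p^*(p^*R) = p^*R = R$ on $\mathrm{Fix}(p^*)$; for the composite of $f : (X,p) \to (Y,q)$ and $g : (Y,q) \to (Z,r)$, contravariant functoriality of $\mathcal{R}$ gives $\bar{\mathcal{R}}(f)\bigl(\bar{\mathcal{R}}(g)(T)\bigr) = p^*f^*(q^*g^*T)$, and since $qf = f$ we have $f^*q^* = (qf)^* = f^*$, so this equals $p^*(gf)^*T = \bar{\mathcal{R}}(gf)(T)$. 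Thus $\bar{\mathcal{R}} : \bar{\mathcal{C}}^{op} \to \CLat$ is a functor, i.e. a relational structure on $\bar{\mathcal{C}}$. Finally I would identify its associated category with $\bar{\mathcal{R}}$ over $\bar{\mathcal{C}}$: an object of $\bar{\mathcal{R}}$ is a pair $((X,R),e)$ with $e : X \to X$ idempotent in $\mathcal{C}$ and $R \le e^*R$ (this inequality being exactly the condition that $e$ is an endomorphism of $(X,R)$ in $\mathcal{R}$), and it corresponds to $((X,e),\,e^*R)$ with $e^*R \in \mathrm{Fix}(e^*)$; conversely $((X,p),R)$ with $R \in \mathrm{Fix}(p^*)$ corresponds to $((X,R),p)$, which is legitimate since $R = p^*R \le p^*R$. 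Using $R \le e^*R$ one checks these assignments are mutually inverse up to canonical isomorphism and that on hom-sets the $\mathcal{R}$-condition $R \le e^*S$ matches $e^*R \le e^*S$ in the new fibre, so the two pictures agree as categories over $\bar{\mathcal{C}}$ and $\bar{p}$ is the projection of the constructed structure.

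I expect the main obstacle to be this last coherence step together with pinning down the reindexing: one must see that the objects of $\bar{\mathcal{R}}$ lying over a given idempotent $p$ of $\mathcal{C}$ are precisely captured by $\mathrm{Fix}(p^*)$, and that $\bar{\mathcal{R}}(f) = p^*\circ f^*$ is exactly the reindexing induced by Karoubi-splitting the underlying functor $p : \mathcal{R} \to \mathcal{C}$. The variance of $\mathcal{R}$ is what makes this go through: the relation $qf = f$ that holds in $\bar{\mathcal{C}}$ forces $f^*q^* = f^*$, which is precisely what collapses the two a priori different candidates for the reindexing map into one well-defined choice. Everything else — completeness of the fibres and preservation of meets — is immediate from $\mathcal{R}$ taking values in $\CLat$.
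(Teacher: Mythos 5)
Your construction is correct, and at bottom it is the same one the paper uses: the fibre over $(X,p)$ is the splitting of the idempotent $p^*$ in $\CLat$, and reindexing along $f : (X,p) \to (Y,q)$ is $S \mapsto p^*(f^*S)$. The difference is packaging. The paper notes that $\CLat$ has split idempotents (take images of idempotent maps) and invokes the universal property of the Karoubi envelope (\Cref{lem:karoubi-free-splitting}) to produce $\hat{\mathcal{R}} : \bar{\mathcal{C}}^{op} \to \CLat$ in one stroke; unwinding the formula $\hat{F}f = r_{Fq} \circ Ff \circ s_{Fp}$ from that lemma yields exactly your $\im(p^*) = \{R \mid p^*R = R\}$ and your reindexing map, so your hand verification of functoriality is subsumed by the abstract argument (though you correctly isolate where the Karoubi equations $qf = f = fp$ enter). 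What your explicit route buys is precision at the comparison step, where the paper's own proof has a small error: it asserts that for an object $((X,R),p)$ of $\bar{\mathcal{R}}$ the inequality $p^*R \leq R$ ``always holds,'' hence $R = p^*R$. This fails for a general relational structure --- e.g.\ for binary relations on a pointed CPO, take $p = \bot$ and $R = \{(\bot,\bot)\}$; then $R \leq p^*R$ but $p^*R$ is the total relation. So the fibres of $\bar{\mathcal{R}}$ are genuine preorders, and its identification with the constructed structure is only an equivalence over $\bar{\mathcal{C}}$, via $R \mapsto p^*R$, exactly as your ``mutually inverse up to canonical isomorphism'' has it. The one thing to make explicit in a final write-up is what the conclusion should then say: either restrict $\bar{\mathcal{R}}$ to the equivalent full subcategory of objects with $R = p^*R$, so that $\bar{p}$ literally arises from your functor $\bar{\mathcal{C}}^{op} \to \CLat$, or state the lemma up to equivalence over $\bar{\mathcal{C}}$.
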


\begin{proof}
  Recall that a relational structure arises from a functor
  $\mathcal{R} : \mathcal{C}^{op} \to \CLat$.  Note that $\CLat$ has split
  idempotents; it suffices to take the image of an idempotent as a function
  between sets.  By the universal property of the Karoubi envelope
  (\Cref{lem:karoubi-free-splitting}), together with the fact that the Karoubi
  envelope commutes with taking opposite categories, we find that $\mathcal{R}$
  corresponds to a unique extension of type
  $\hat{\mathcal{R}} : \bar{\mathcal{C}}^{op} \to \CLat$.  By unfolding
  definitions, we see that this extension is isomorphic to $\bar{\mathcal{R}}$
  when seen as a category.  For instance, let us compute the fiber of
  $\bar{\mathcal{R}}$ over an object $(X,p) : \bar{\mathcal{C}}$.  An object in
  the fiber is an idempotent $p : (X,R) \to (X,R)$ in $\mathcal{R}$, which means
  that $R \leq p^*R$.  Since $p^*R \leq R$ always holds, we find that
  $p^*R = R$, which means that $R$ is in the image of $p^*$.  Thus,
  $R : \hat{\mathcal{R}}(X,p)$.  Conversely, if $R : \hat{\mathcal{R}}(X,p)$, we
  find that $p^*R = R$, and thus $p : (X,R) \to (X,R)$ in $\mathcal{R}$.
\end{proof}

\section{Canonical Embeddings}

In \Cref{sec:kleene}, we have sketched how the choice of representatives for
embeddings allows us to apply Kleene's fixed point theorem.  We will now detail
some of these constructions for a $\CPO$-category $\mathcal{C}$, which is
assumed to have split idempotents.

\begin{definition}
  \label{def:idempotent-order}
  Given an object $D \in \mathcal{C}$, we define a partial order on the set of
  idempotents on $D$ as follows. Given two idempotents $p, q : D \to D$, we say
  that $p \leq q$ if one of the following two equivalent properties holds:
  \begin{itemize}
  \item $p = pq = qp$
  \item $p = qpq$.
  \end{itemize}
\end{definition}

\begin{definition}
  \label{def:canonical-embeddings}
  Given $D \in \mathcal{C}$, we pose
  \[ \mathcal{E}_D \triangleq \{ p : D \to D \mid pp = p, p \sqsubseteq 1_D
    \}. \] We view $\mathcal{E}_D$ as a poset under the order of
  \Cref{def:idempotent-order}.
\end{definition}

\begin{lemma}
  \label{lem:embeddings-are-a-cpo}
  The set $\mathcal{E}_D$ is a CPO, and the inclusion
  $\mathcal{E}_D \hookrightarrow D \to D$ is continuous.  Moreover,
  $\mathcal{E}_D$ is pointed if $D$ is.
\end{lemma}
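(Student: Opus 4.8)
The plan is to show that an increasing chain $(p_n)_{n \in \N}$ in $\mathcal{E}_D$ — increasing in the idempotent order of \Cref{def:idempotent-order}, so $p_m = p_mp_n = p_np_m$ whenever $m \leq n$ — has a least upper bound in $\mathcal{E}_D$, and that this least upper bound coincides with the least upper bound $p_\infty \triangleq \lim_n p_n$ computed in the CPO $\mathcal{C}(D,D)$. The first observation is that the order on $\mathcal{E}_D$ refines the pointwise order: if $p \leq q$ in the sense of \Cref{def:idempotent-order}, then since $q \sqsubseteq 1_D$ we get $p = qpq \sqsubseteq q$, so a $\leq$-chain is in particular a $\sqsubseteq$-chain and $p_\infty$ exists in $\mathcal{C}(D,D)$. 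Also $p_\infty \sqsubseteq 1_D$ because each $p_n \sqsubseteq 1_D$ and $1_D$ is an upper bound. The main point is to check that $p_\infty$ is idempotent: using continuity of composition in the $\CPO$-category $\mathcal{C}$, compute $p_\infty p_\infty = (\lim_m p_m)(\lim_n p_n) = \lim_m \lim_n p_m p_n$. For $m \leq n$ we have $p_m p_n = p_m$, and for $m \geq n$ we have $p_m p_n = p_n$ (by the other half of the chain condition), so in either case $p_m p_n \sqsubseteq p_{\max(m,n)} \sqsubseteq p_\infty$, giving $p_\infty p_\infty \sqsubseteq p_\infty$; conversely $p_\infty = \lim_m p_m p_m \sqsubseteq \lim_m\lim_n p_m p_n = p_\infty p_\infty$ since each $p_m$ is idempotent. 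Hence $p_\infty p_\infty = p_\infty$ and $p_\infty \in \mathcal{E}_D$.

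Next I would verify that $p_\infty$ is the least upper bound in the order of \Cref{def:idempotent-order}, not merely in $\sqsubseteq$. For each fixed $m$, I claim $p_m \leq p_\infty$, i.e. $p_m = p_m p_\infty = p_\infty p_m$: by continuity, $p_m p_\infty = \lim_n p_m p_n = \lim_n p_m = p_m$ (using $p_m p_n = p_m$ for all $n \geq m$ and eventual constancy), and symmetrically $p_\infty p_m = p_m$; so $p_\infty$ is an upper bound. If $q \in \mathcal{E}_D$ is any other upper bound, so $p_n = p_n q = q p_n$ for all $n$, then $p_\infty q = \lim_n p_n q = \lim_n p_n = p_\infty$ and likewise $q p_\infty = p_\infty$, which is exactly $p_\infty \leq q$. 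Therefore $p_\infty = \bigsqcup_n p_n$ in $\mathcal{E}_D$, so $\mathcal{E}_D$ is a CPO and, by construction, the inclusion $\mathcal{E}_D \hookrightarrow \mathcal{C}(D,D)$ preserves these limits, hence is continuous (monotonicity of the inclusion was the first observation). For pointedness: if $D$ is pointed then $\bot : D \to D$ is idempotent since $\bot\bot = \bot$, and $\bot \sqsubseteq 1_D$, so $\bot \in \mathcal{E}_D$; and $\bot \leq p$ for every idempotent $p \sqsubseteq 1_D$ because $p\bot = \bot$ (by pointedness, $\bot \circ f = \bot$ and any $\bot$ factors through $1$, giving $\bot p = \bot$ as well), so $\bot$ is the least element of $\mathcal{E}_D$.

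The step I expect to be the main obstacle is the careful handling of the two-sided chain condition $p_m = p_m p_n = p_n p_m$ and making sure the double limit $\lim_m \lim_n p_m p_n$ is manipulated correctly via continuity of composition — in particular, one must use that composition in a $\CPO$-category is continuous in each argument separately (equivalently, jointly continuous), and that the inner sequences $n \mapsto p_m p_n$ are eventually constant, so that swapping or collapsing the limits is legitimate. The idempotence argument and the comparison between the $\leq$-order and the $\sqsubseteq$-order is where a naive proof could go wrong, since $\mathcal{E}_D$ is \emph{not} simply a sub-CPO of $\mathcal{C}(D,D)$ for the pointwise order (the orders merely agree on chains of idempotents below $1_D$); everything else — monotonicity of the inclusion, pointedness — is routine once these facts are in place.
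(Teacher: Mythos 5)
Your proof is correct and follows essentially the same route as the paper: take the limit $p_\infty$ of the chain in the hom-CPO $\mathcal{C}(D,D)$, use continuity of composition to show it is an idempotent below $1_D$ and is the least upper bound for the order of \Cref{def:idempotent-order}, and exhibit $\bot$ as the least element for pointedness. If anything, your verification that $p_\infty$ is \emph{least} among upper bounds in the idempotent order (via $p_\infty q = q p_\infty = p_\infty$ by continuity of composition in one argument) is spelled out more carefully than the corresponding step in the paper, which at that point only argues in the pointwise order $\sqsubseteq$.
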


\begin{proof}
  First, let us show that the inclusion $\mathcal{E}_D \hookrightarrow D \to D$
  is monotone.  If $p \leq q$, then $p = pq \sqsubseteq 1q = q$.

  Now, let us show that $\mathcal{E}_D$ is a CPO. Let $(q_i)_{i \in \N}$ be an
  increasing chain, which yields a corresponding increasing chain in $D \to D$.
  This chain has a limit $q_\infty = \lim_i q_i$ in $D \to D$.  We have
  $q_\infty \sqsubseteq 1$ because $q_i \sqsubseteq 1$ for every $i \in \N$.
  Since composition is continuous, $q_\infty$ is idempotent, and thus
  $q_\infty \in \mathcal{E}_D$.  Note that $q_i \leq q_\infty$ in
  $\mathcal{E}_D$ because
  \begin{align*}
    q_\infty q_iq_\infty
    & = \left(\lim_j q_j\right)q_i\left(\lim_jq_j\right) \\
    & = \lim_j q_jq_iq_j \\
    & = \lim_{j \geq i} q_jq_iq_j \\
    & = \lim_{j \geq i} q_i \\
    & = q_i.
  \end{align*}
  Now, if $q' \geq q_i$ for every $i \in \N$, we can show that
  $q' \sqsubseteq q_i$ in $D \to D$, and thus $q' \sqsubseteq \lim_i q_i$.
  Thus, $q_\infty$ is indeed the least upper bound.

  Since least upper bounds in $\mathcal{E}_D$ were defined as least upper bounds
  in $D \to D$, the inclusion of the former into the latter is trivially
  continuous.

  Finally, suppose that $D$ is pointed.  Then $\bot$ is the least element of
  $\mathcal{E}_D$.  It belongs to $\mathcal{E}_D$ because it is idempotent and
  it is trivially under the identity.  Moreover, given some
  $q \in \mathcal{E}_p$, we have $\bot q = \bot$ because $D$ is pointed, and
  $q \bot = \bot$ because $q \bot \sqsubseteq 1 \bot = \bot$.  Taken together,
  these facts mean that $\bot \sqsubseteq q$.
\end{proof}

Because of this result, we will use $\sqsubseteq$ to denote the ordering on the
CPO $\mathcal{E}_D$.

\begin{lemma}
  The following defines an equivalence of preorders:
  \begin{align*}
    i & : \mathcal{E}_D \to \mathcal{C}^e/D \\
    i(p) & \triangleq (s_p : \im(p) \to D) \\[1em]
    j & : \mathcal{C}^e/D \to \mathcal{E}_D \\
    j(f^e : X \to D) & \triangleq f^ef^p.
  \end{align*}
\end{lemma}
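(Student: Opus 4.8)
The plan is to verify the four conditions defining an equivalence of preorders: that $i$ and $j$ are monotone, that $j\circ i$ is literally the identity on $\mathcal{E}_D$, and that $i\circ j$ is isomorphic to the identity on $\mathcal{C}^e/D$. Throughout I would use that each embedding has a unique associated projection, and that a splitting $p = s_p r_p$ with $r_p s_p = 1$ of an idempotent $p\sqsubseteq 1_D$ automatically makes $(s_p,r_p)$ an embedding--projection pair (since $s_p r_p = p \sqsubseteq 1_D$ and $r_p s_p = 1_{\im(p)}$), so that $i(p) = s_p$ is genuinely an object of $\mathcal{C}^e/D$; dually, $f^ef^p$ is idempotent and below $1_D$, so $j(f^e)$ genuinely lands in $\mathcal{E}_D$.

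The two round-trips are the easiest part. For $j\circ i$, by definition $j(i(p)) = s_p r_p = p$, so $j\circ i = \mathrm{id}_{\mathcal{E}_D}$ on the nose. For $i\circ j$, the key observation is that $(f^e,f^p)$ is itself a splitting of the idempotent $f^ef^p = j(f^e)$, since $f^pf^e = 1$. By \Cref{lem:splittings-unique}, this splitting is canonically isomorphic to $(s_{f^ef^p}, r_{f^ef^p})$; the unique mediating map $\iota$ is an isomorphism in $\mathcal{C}$, hence an embedding, and it satisfies $s_{f^ef^p}\circ\iota = f^e$, so it is a morphism $f^e \to i(j(f^e))$ in $\mathcal{C}^e/D$. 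Its inverse gives a morphism the other way, so $f^e$ and $i(j(f^e))$ are isomorphic in the preorder $\mathcal{C}^e/D$, as required.

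For monotonicity of $j$: if $(f^e : X \to D) \leq (g^e : Y \to D)$ via an embedding $h : X \to Y$ with $g^e h = f^e$, then taking projections gives $f^p = h^p g^p$ by uniqueness of projections. A short calculation using $g^pg^e = 1$ and $h^ph = 1$ then shows $(g^eg^p)(f^ef^p) = f^ef^p = (f^ef^p)(g^eg^p)$, which is exactly the statement $j(f^e)\leq j(g^e)$ in $\mathcal{E}_D$.

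Monotonicity of $i$ is the step I expect to require the most care. Given $p\leq q$ in $\mathcal{E}_D$, i.e. $p = pq = qp$, the candidate mediating map is $h\triangleq r_q s_p : \im(p)\to\im(q)$. From $qp = p$ and $r_p s_p = 1$ one gets $q s_p = q p s_p = p s_p = s_p$, whence $s_q h = q s_p = s_p$, so the required triangle over $D$ commutes. It then remains to check that $h$ is an embedding; its projection will be $r_p s_q$, and verifying $h(r_p s_q) = r_q p s_q \sqsubseteq 1_{\im(q)}$ uses $p\sqsubseteq q$ as endomorphisms of $D$, which was established in the proof of \Cref{lem:embeddings-are-a-cpo}. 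Putting these together, $i$ and $j$ are monotone and mutually quasi-inverse, so they form an equivalence of preorders (in fact exhibiting $\mathcal{E}_D$ as a skeleton of $\mathcal{C}^e/D$).
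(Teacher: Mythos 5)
Your proposal is correct and follows essentially the same route as the paper: the same mediating maps $r_qs_p$ and $r_ps_q$ for monotonicity of $i$, the same projection-based computation for monotonicity of $j$ (you use the $p=pq=qp$ form of the order where the paper uses the equivalent $p=qpq$ form), and the same appeal to the splitting identities and \Cref{lem:splittings-unique} for the two round trips. The only loose end is that you should also record the computation $r_ps_q\circ r_qs_p = r_pqs_p = r_ps_p = 1$ to finish verifying that $r_qs_p$ is an embedding, but this follows immediately from the identity $qs_p = s_p$ you already derived.
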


\begin{proof}
  First, note that, since embeddings are monomorphisms, the category
  $\mathcal{C}^e/D$ is indeed a (large) preorder. Note that $i(p)$ is an
  embedding whose corresponding projection is $r_p : D \to \im(p)$:
  $r_ps_p = 1_{\im(p)}$ because the arrows split $p$, and
  $s_pr_p = p \sqsubseteq 1_D$.

  Next, let us show that $i$ is monotone.  Suppose that $p \sqsubseteq q$ in
  $\mathcal{E}_D$.  Pose $f^e \triangleq r_qs_p$ and $f^p \triangleq r_ps_q$.
  We have
  \begin{align*}
    s_qf^e & = s_qr_qs_p \\
    & = qs_p \\
    & = qs_pr_ps_p \\
    & = qps_p \\
    & = ps_p & \text{because $p \sqsubseteq q$} \\
    & = s_pr_ps_p \\
    & = s_p\\
    f^ef^p & = r_qs_pr_ps_q \\
    & = r_qps_q \\
    & \sqsubseteq r_q1s_q \\
    & = r_qs_q \\
    & = 1 \\
    f^pf^e & = r_ps_qf^e \\
    & = r_ps_p \\
    & = 1.
  \end{align*}
  This shows that $f^e : \im(p) \to \im(q)$ is an embedding, and that its
  projection is $f^p$.  This implies that $i$ is monotone.

  Conversely, to see that $j$ is monotone, suppose that we have a commuting
  triangle of embeddings
  \begin{center}
    \begin{tikzcd}
      X \ar{dr}{p^e} \ar{rr}{f^e} & & Y \ar{dl}{q^e} \\
      & D. &
    \end{tikzcd}
  \end{center}
  Then $j(p^e) \sqsubseteq j(q^e)$ because
  \begin{align*}
    j(q^e)j(p^e)j(q^e)
    & = q^eq^pp^ep^pq^eq^p \\
    & = q^eq^pq^ef^ef^pq^pq^eq^p \\
    & = 1q^ef^ef^pq^p1 \\
    & = p^ep^p \\
    & = j(p^e).
  \end{align*}

  To conclude, we need to show that $i$ and $j$ form an equivalence.  The
  properties of splittings imply that $j(i(p)) = p$ and that
  $i(j(p^e : X \to D)) \cong p^e$, which allows us to conclude.

\end{proof}

\fi

\end{document}
